 \newtheorem{theorem}{Theorem}
 \newtheorem{definition}{Definition}
 \newtheorem{lemma}{Lemma}
 \newtheorem{corollary}{Corollary}
\newtheorem{problem}{Problem}
 \newtheorem{remark}{Remark}
 \newtheorem{proposition}{Proposition}
 \newcommand{\qed}{\hfill \mbox{\raggedright \rule{.07in}{.1in}}}
 \newenvironment{proof}{\vspace{1ex}\noindent{\bf Proof}\hspace{0.5em}}
 	{\hfill\qed\vspace{1ex}} 
\DeclareMathOperator{\rk}{rk}
\DeclareMathOperator{\defi}{def}
\newcommand{\defeq}{\overset{\defi}{=}}
\newcommand{\Fqm}{\mathbb F_{q^m}}
\newcommand{\Fqn}{\mathbb F_{q^n}}
\newcommand{\Fq}{\mathbb F_{q}}
\newcommand{\dhalffrac}{\left\lfloor \frac{d-1}{2}\right\rfloor}
\newcommand{\dhalfnicefrac}{\left\lfloor \nicefrac{(d-1)}{2}\right\rfloor}
\newcommand{\taujohnson}{n-\sqrt{n(n-d)}}
\newcommand{\Ball}[2]{\mathcal{B}_{#1}(#2)}
\newcommand{\Sphere}[2]{\mathcal{S}_{#1}(#2)}
\newcommand{\myspace}[1]{\mathcal{#1}}
\newcommand{\Rowspace}[1]{\mathcal{R}_q(#1)}
\newcommand{\Colspace}[1]{\mathcal{C}_q(#1)}
\newcommand{\Grassm}[1]{\mathcal{G}_q(#1)}
\newcommand{\Projspace}{\Fq^n} 
\newcommand{\Subspacedist}[1]{d_S(#1)}
\newcommand{\maxCardinalityRank}[1]{\mycode{A}_{q^m}^R\left(#1\right)}
\newcommand{\maxCardinalitySub}[1]{\mycode{A}^S_{q}\left(#1\right)}
\newcommand{\List}{\mathcal L}
\newcommand{\mycode}[1]{\ensuremath{\mathsf{#1}}}
\newcommand{\code}[1]{\ensuremath{(#1)_R}}
\newcommand{\MRDlinear}[1]{\ensuremath{\mycode{MRD}[#1]}}
\newcommand{\Gab}[1]{\ensuremath{\mycode{Gab}[#1]}}
\newcommand{\CRC}[1]{\ensuremath{\mycode{CR}_{q^m}(#1)}}
\newcommand{\CDC}[1]{\ensuremath{\mycode{CD}_q(#1)}}
\renewcommand{\a}{\mathbf a}
\renewcommand{\b}{\mathbf b}
\renewcommand{\c}{\mathbf c}
\renewcommand{\r}{\mathbf r}
\newcommand{\x}{\mathbf x}
\newcommand{\A}{\mathbf A}
\newcommand{\B}{\mathbf B}
\newcommand{\C}{\mathbf C}
\newcommand{\D}{\mathbf D}
\newcommand{\G}{\mathbf G}
\newcommand{\I}{\mathbf I}
\renewcommand{\H}{\mathbf H}
\newcommand{\R}{\mathbf R}
\renewcommand{\S}{\mathbf S}
\newcommand{\X}{\mathbf X}
\newcommand{\Y}{\mathbf Y}
\newcommand{\0}{\mathbf 0}
\newcommand{\quadbinom}[2]{\ensuremath{
{#1
\brack
#2}
}}
\begin{document}

\title{Bounds on List Decoding of Rank-Metric Codes}

\IEEEoverridecommandlockouts

\author{{Antonia Wachter-Zeh, \IEEEmembership{Student Member, IEEE}
\thanks{Parts of 
this work were presented at the \textit{Thirteenth International Workshop on Algebraic and Combinatorial Coding Theory (ACCT)}, June 2012, Pomorie, Bulgaria \cite{Wachterzeh-BoundsListDecodingGabidulin_conf}
and at the \textit{IEEE International Symposium on Information Theory (ISIT)}, July 2013, Istanbul, Turkey \cite{Wachterzeh-BoundsRankMetricListDecoding_ISIT2013_conf}. 
This work has been supported by the German Research Council (DFG) under grant Bo~867/21-1.

A. Wachter-Zeh is with the Institute of Communications Engineering, University of Ulm, D-89081 Ulm, Germany
and Institut de Recherche Mathémathique de Rennes (IRMAR), Université de Rennes 1, 35042 Rennes Cedex, France
(e-mail: antonia.wachter@uni-ulm.de).
}}}

\maketitle

\begin{abstract}
So far, there is no polynomial-time list decoding algorithm (beyond half the minimum distance) for  Gabidulin codes. These codes can be seen as the rank-metric equivalent of Reed--Solomon codes.
In this paper, we provide bounds on the list size of rank-metric codes in order to understand whether polynomial-time list decoding is possible or whether it works only with exponential time complexity.
Three bounds on the list size are proven. The first one is a lower exponential bound for Gabidulin codes and shows that for these codes no polynomial-time
list decoding beyond the Johnson radius exists. 
Second, an exponential upper bound is derived, which holds for any rank-metric code of length $n$ and minimum rank distance $d$.
The third bound proves that there exists a rank-metric code over $\Fqm$ of length $n \leq m$ such that the list size is exponential in the length for any radius 
greater than \emph{half the minimum rank distance}. 
This implies that there cannot exist a \emph{polynomial} upper bound depending only on $n$ and $d$ similar to the Johnson bound in Hamming metric.
All three rank-metric bounds reveal significant differences to bounds for codes in Hamming metric.
\end{abstract}

\begin{keywords}
Rank-metric codes, Gabidulin codes, list decoding, constant-rank codes
\end{keywords}

\section{Introduction}\label{sec:introduction}
\IEEEPARstart{R}{ank-metric} codes lately attract more and more attention due to their possible application to error control in random linear network coding \cite{koetter_kschischang,silva_rank_metric_approach,GabidulinPilipchukBossert-CorrectingErasuresAndErrorsInRandomNetworkCoding_2010}. 
A code in rank metric can be considered as a set of $m \times n$ matrices over a finite field $\Fq$ or equivalently as a set of vectors of length $n$ over an extension field $\Fqm$ of $\Fq$.
The rank weight of a word is the rank of its matrix representation and the rank distance between two matrices is the rank of their difference.

\emph{Gabidulin} codes can be seen as the rank metric equivalent to Reed--Solomon codes and were introduced by Delsarte \cite{Delsarte_1978}, Gabidulin \cite{Gabidulin_TheoryOfCodes_1985}, and Roth \cite{Roth_RankCodes_1991}. 
They can be defined by evaluating degree-restricted \emph{linearized polynomials}, which were introduced by Ore \cite{Ore_OnASpecialClassOfPolynomials_1933,Ore_TheoryOfNonCommutativePolynomials_1933}. 
Additionally to the definition as evaluation codes, the similarities between Gabidulin and Reed--Solomon codes go further.
There are several algorithms for (unique) decoding of Gabidulin codes up to half the minimum rank distance,
which have a famous equivalent for Reed--Solomon codes: the algorithm by Roth \cite{Roth_RankCodes_1991} and similarly by Gabidulin \cite{Gabidulin1992Fast} solving a system of equations as the Peterson algorithm, 
a method based on the linearized Euclidean algorithm \cite{Gabidulin_TheoryOfCodes_1985}, Paramonov and Tretjakov's and Richter and Plass' Berlekamp--Massey-like linearized shift-register synthesis \cite{Paramonov_Tretjakov_BMA_1991,Richter_RankCodes_2004,RichterPlass_DecodingRankCodes_2004,Sidorenko2011Linearized},
Loidreau's Welch--Berlekamp-like algorithm \cite{Loidreau_AWelchBerlekampLikeAlgorithm_2006} and many more \cite{GabidulinPilipchuck_ErrorErasureRankCodes_2008,SilvaKschischang-FastEncodingDecodingGabidulin-2009,Xie2011General,WachterAfanSido-FastDecGabidulin_DCC_journ}.

A \emph{list decoding} algorithm returns the list of \emph{all} codewords in distance at most $\tau$ from any given word. 
The idea of list decoding was introduced by Elias \cite{elias_list_1957} and Wozencraft \cite{wozencraft_list_1958}.
In Hamming metric, the \emph{Johnson upper bound} \cite{Johnson1962New,Johnson1963Improved,Bassalygo1965New,Guruswami_ListDecodingofError-CorrectingCodes_1999,Guruswami_ALGORITHMICRESULTSINLISTDECODING_2007} shows that the size of this list is polynomial in $n$ when $\tau$ is less than the so-called Johnson radius
$\tau_J = n- \sqrt{n(n-d_H)}$ for \emph{any} code of length $n$ and minimum Hamming distance $d_H$.
Although this fact has been known since the 1960s, a polynomial-time list decoding algorithm for Reed--Solomon codes up 
to the Johnson radius was found not earlier than 1999 by Guruswami and Sudan \cite{Guruswami-Sudan:IEEE_IT1999} as a generalization of the Sudan algorithm \cite{Sudan:JOC1997}.
Further, in Hamming metric, it can be shown that there exists a code such that the list size becomes \emph{exponential} in $n$ when $\tau$ is at least the Johnson radius \cite{Goldreich-Rubinfeld-Sudan:DM2000}, \cite[Chapter~4]{Guruswami_ListDecodingofError-CorrectingCodes_1999}.
It is not known whether such an exponential list beyond the Johnson radius also exists for Reed--Solomon codes. There are several articles, 
which show an exponential behavior of the list size for Reed--Solomon codes only for a radius rather greater than $\tau_J$ (see e.g. Justesen and H\o{}holdt \cite{Justesen2001Bounds} and Ben-Sasson, Kopparty and Radhakrishnan \cite{BenSasson2010Subspace}).

However, for Gabidulin codes, so far there exists \emph{no} polynomial-time list decoding algorithm and it is not even known whether it can exist or not.
The contributions by Mahdavifar and Vardy and by Guruswami and Xing provide list decoding algorithms for special classes of Gabidulin codes and subcodes of Gabidulin codes \cite{Mahdavifar2010Algebraic,Mahdavifar2012Listdecoding,GuruswamiXing-ListDecodingRSAGGabidulinSubcodes_2012}.

On the one hand, a lower bound on the maximum list size, which is exponential in the length $n$ of the code, rules out the possibility of polynomial-time list decoding since already writing down the list has exponential complexity.
On the other hand, a polynomial upper bound---as the Johnson bound for Hamming metric---shows that polynomial-time list decoding algorithms might exist.

In this contribution, we investigate bounds on list decoding rank-metric codes in general and Gabidulin codes in particular.
We derive three bounds on the maximum list size when decoding rank-metric codes.
In spite of the numerous similarities between Hamming metric and rank metric and even more between Reed--Solomon and Gabidulin codes, 
all three bounds show a strongly different behavior for rank-metric codes. 
The first bound is a \emph{lower bound for Gabidulin codes} of length $n$ and minimum rank distance $d$, which proves (for $n=m$) an exponential list size if the radius is at least the Johnson radius $\tau_J = n- \sqrt{n(n-d)}$.
The second bound is an \emph{exponential upper bound for any rank-metric code}, which provides no conclusion about polynomial-time list decodability.
Finally, the third bound shows that there exists a rank-metric code over $\Fqm$ of length $n \leq m$ such that the list size is \emph{exponential} in the length $n$ when the decoding 
radius is greater than \emph{half the minimum distance}. For these codes, hence, no polynomial-time list decoding can exist. Moreover, it shows that 
purely as a function of the length $n$ and the minimum rank distance $d$, there cannot exist a polynomial upper bound for an arbitrary code similar to the Johnson bound for Hamming metric.

This paper is structured as follows.
Section~\ref{sec:preliminaries} introduces notations about finite fields and rank-metric codes. 
Moreover, we give some useful lemmas, state the problem and show connections between constant-rank and constant-dimension codes. 
In Section~\ref{sec:bound_gabidulin}, the lower bound for list decoding of Gabidulin codes is derived using the evaluation of linearized polynomials.
Section~\ref{sec:bounds} first explains how the list of codewords is connected to a constant-rank code and provides the upper bound, which holds for any rank-metric code.
Second, we derive the existence of a rank-metric code with exponential list size beyond half the minimum distance.
Finally, in Section~\ref{sec:interpretation}, we interpret the new bounds and explain the differences between the bounds for Hamming and rank metric.

\section{Preliminaries}\label{sec:preliminaries}
\subsection{Finite Fields and Subspaces}
Let $q$ be a power of a prime, and let us denote by $\Fq$ the finite field of order $q$ and by $\Fqm$ its extension field of degree $m$. 
We use $\Fq^{s \times n}$ to denote the set of all $s\times n$ matrices over $\Fq$ and 
$\Fqm^n =\Fqm^{1 \times n}$ for the set of all {row} vectors of length $n$ over $\Fqm$. 
Therefore, $\Fq^n$ denotes the vector space of dimension $n$ over $\Fq$. 
The \emph{Grassmannian} of dimension $r$ is the set of all subspaces of $\Fq^n$ of dimension $r \leq n$ and is denoted by $\Grassm{n,r}$. 
The cardinality of $\Grassm{n,r}$ is the so-called Gaussian binomial, calculated by
\begin{equation*}
\big|\Grassm{n,r}\big|=\quadbinom{n}{r} \defeq \prod\limits_{i=0}^{r-1} \frac{q^n-q^i}{q^r-q^i},
\end{equation*}
with the upper and lower bounds (see e.g. \cite[Lemma~4]{koetter_kschischang})
\begin{equation}\label{eq:bounds_gaussian_binomial}
q^{r(n-r)}\leq \quadbinom{n}{r} \leq 4 q^{r(n-r)}.
\end{equation}

For two subspaces $\myspace{U},\myspace{V}$ in $\Projspace$, we denote by $\myspace{U}+\myspace{V}$ the smallest subspace containing the union of $\myspace{U}$ and $\myspace{V}$. 
The \emph{subspace distance} between $\myspace{U},\myspace{V}$ in $\Projspace$ is defined by
\begin{align*}
\Subspacedist{\myspace{U},\myspace{V}}&=\dim(\myspace{U}+\myspace{V})-\dim(\myspace{U}\cap \myspace{V})\\
&=2 \dim(\myspace{U}+\myspace{V})-\dim(\myspace{U})-\dim(\myspace{V}).
\end{align*}
It can be shown that the subspace distance is indeed a metric (see e.g. \cite{koetter_kschischang}).

A \emph{subspace code} is a non-empty subset of subspaces of $\Fq^n$ and has minimum subspace distance $d_S$, when all subspaces in the code have subspace distance at least $d_S$.
The codewords of a subspace code are therefore subspaces.
A \emph{constant-dimension code} of dimension $r$, cardinality $M$ and minimum subspace distance $d_S$ is a subset of $\Grassm{n,r}$, i.e., it is a special subspace code and is
denoted by $\CDC{n,M,d_S,r}$. 

The maximum cardinality of a constant-dimension code for fixed parameters $n, d_S, r$ is denoted by $\maxCardinalitySub{n,d_S,r}$.

\subsection{Rank-Metric Codes}

For a given basis of $\Fqm$ over $\Fq$, there exists a one-to-one mapping for each vector $\mathbf x \in \mathbb{F}_{q^m}^n$ on a matrix $\mathbf X \in \Fq^{m \times n}$. 
Let $\rk(\mathbf x)$ denote the (usual) rank of $\mathbf X$ over $\mathbb{F}_{q}$ and let 
$\Rowspace{\X}$, $\Colspace{\X}$ denote the row  and column space of $\X$ over $\Fq$. 
The right kernel of a matrix is denoted by $\ker(\x) = \ker(\X)$. 
The rank-nullity theorem states that for an $m \times n$ matrix, if $\dim \ker(\x) = t$, then $\dim \Colspace{\X}=\rk(\x) = n-t$.
Throughout this paper, we use the notation as vector (e.g. from $\Fqm^n$) or matrix (e.g. from $\Fq^{m \times n}$) equivalently, whatever is more convenient.

The \emph{minimum rank distance} $d_R$ of a block code $\mycode{C}$ is defined by 
\begin{equation*}
d_R = \min \big\lbrace \rk(\mathbf c_1-\c_2) \; : \; \mathbf c_1,\c_2 \in \mycode{C}, \mathbf c_1 \neq \mathbf c_2\big\rbrace. 
\end{equation*}

Let an $\code{n,M, d}$ code $\mycode{C}$ over $\Fqm$ denote a code in rank metric (not necessarily linear) of cardinality $M$ and minimum rank distance $d_R=d$. 
Its codewords are in $\Fqm^n$ or equivalently represented as matrices in $\Fq^{m\times n}$. 
W.l.o.g. we assume throughout this paper that $n\leq m$. If this is not the case, we consider the transpose of all matrices such that $n \leq m$ holds.
We call $n$ the \emph{length} of such a block code in rank metric over $\Fqm$.

The cardinality $M$ of an $\code{n,M, d}$ code over $\Fqm$ with $n \leq m$ is restricted by a \emph{Singleton}-like upper bound (see \cite{Delsarte_1978,Gabidulin_TheoryOfCodes_1985,Roth_RankCodes_1991}):
\begin{equation}\label{eq:Singleton_like_rank}
M \leq q^{\min\{n(m-d+1),\; m(n-d+1)\}} =q^{m(n-d+1)}.
\end{equation}
For linear codes of length $n \leq m$ and dimension $k$, this implies that $d_R \leq n-k+1$.
If the cardinality of a code fulfills \eqref{eq:Singleton_like_rank} with equality, the code is called a \emph{maximum rank distance} (MRD) code. 
A linear MRD code over $\Fqm$ of length $n \leq m$, dimension $k$ and minimum rank distance $d_R=n-k+1$ is denoted by $\MRDlinear{n,k}$ and has cardinality $M = q^{mk}$.

A special class of rank-metric codes are \emph{constant-rank codes}. Such a $\CRC{n,M,d,r}$ constant-rank code of length $n\leq m$ and minimum rank distance $d_R = d$ over $\Fqm$ is an $\code{n,M, d}$ code in rank metric, where all codewords have the same rank $r$.
The maximum cardinality of a constant-rank code for fixed parameters $n, d_R, r$ is denoted by $\maxCardinalityRank{n,d_R,r}$.

Further, $\Ball{\tau}{\a}$ denotes a ball of radius $\tau$ in rank metric around a word $\a\in \Fqm^n$ and $\Sphere{\tau}{\a}$ denotes a sphere in rank metric of radius $\tau$ around the word $\a$. 
The cardinality of $\Sphere{\tau}{\a}$ is the number of $m\times n$ matrices in $\Fq$, which have rank distance \emph{exactly} $\tau$ from $\a$ and the cardinality of a ball of radius $\tau$ is the number of $m\times n$ matrices in $\Fq$, which have rank distance \emph{less than or equal to} $\tau$. Therefore, (see e.g. \cite{Gadouleau2008Packing}):
\begin{equation*}
|\Ball{\tau}{\a}| = \sum\limits_{i=0}^{\tau} |\Sphere{i}{\a}| = \sum\limits_{i=0}^{\tau} \quadbinom{m}{i} \prod\limits_{j=0}^{i-1} (q^n-q^j).
\end{equation*}
The volumes of $\Ball{\tau}{\a}$ and $\Sphere{\tau}{\a}$ are independent of the choice of their center.

\subsection{Gabidulin Codes}
Gabidulin codes \cite{Delsarte_1978,Gabidulin_TheoryOfCodes_1985,Roth_RankCodes_1991} are a special class of MRD codes and 
are often considered as the analogs of Reed--Solomon codes in rank metric.
In order to define Gabidulin codes as evaluation codes, we give some basic properties of linearized polynomials \cite{Ore_OnASpecialClassOfPolynomials_1933,Ore_TheoryOfNonCommutativePolynomials_1933,Lidl-Niederreiter:FF1996}.

Let us denote the $q$-power by $x^{[i]}= x^{q^i}$ for any integer $i$. 
A \emph{linearized polynomial} 
over $\Fqm$ has the form
\begin{equation*}
f(x) = \sum_{i=0}^{d_f} f_i x^{[i]}, 
\end{equation*}
with $f_i \in \Fqm$. 
If the coefficient $f_{d_f}\neq 0$, we call $d_f \defeq \deg_q f(x)$ the \textit{q-degree} of $f(x)$. 
For all $\alpha_1,\alpha_2 \in \Fq$ and all $a,b \in \Fqm$, the following holds: 
\begin{equation*}
f(\alpha_1 a+\alpha_2 b) = \alpha_1 f(a)+\alpha_2 f(b).
\end{equation*}
The (usual) addition and the non-commutative composition $f(g(x))$ (also called \emph{symbolic product}) convert the set of linearized polynomials into a non-commutative ring with the identity element $x^{[0]}=x$. In the following, all polynomials are linearized polynomials. 

A \emph{Gabidulin code} can be defined by the evaluation of degree-restricted linearized polynomials as follows.
\begin{definition}[Gabidulin Code, \cite{Gabidulin_TheoryOfCodes_1985}]\label{def:gabidulin_code}
A linear Gabidulin code $\Gab{n,k}$ over $\Fqm$ of length $n\leq m$ and dimension $k \leq n$ is the set of all words, 
which are the evaluation of a $q$-degree-restricted linearized polynomial $f(x)$:
\begin{equation*}
\Gab{n,k}\! \defeq\!\Big\lbrace\! \left(f(\alpha_0) \; f(\alpha_1) \; \dots f(\alpha_{n-1})\right) : \deg_q f(x) < k)\!\Big\rbrace,
\end{equation*}
where the fixed elements $\alpha_0,\alpha_1, \dots, \alpha_{n-1} \in \Fqm$ are linearly independent over $\Fq$. 
\end{definition}
It can be shown that Gabidulin codes are MRD codes, i.e., $d_R=n-k+1$ (see \eqref{eq:Singleton_like_rank}).

\subsection{Connections between Constant-Dimension and Constant-Rank Codes}
Gadouleau and Yan showed in \cite{Gadouleau2010ConstantRank} connections between constant-dimension and constant-rank codes. In this subsection, we recall and generalize some of their results, since we use them in the next sections for bounding the list size.

The first lemma is a well-known algebraic fact, called rank decomposition.
\begin{lemma} [Rank Decomposition, {\cite[Theorem~3.13]{Stewart-MatrixAlgorithms_Book_1998}}]\label{lem:rank_decomp} 

Let a matrix $\X \in \Fq^{m\times n}$ of rank $r$ be given.
Then, there exist full rank matrices $\G \in \Fq^{r\times m}$ and $\H \in \Fq^{r\times n}$  such that $\X=\G^T\H$.
Moreover, $\Colspace{\X}= \Rowspace{\G}\in \Grassm{m,r}$ and $\Rowspace{\X}=\Rowspace{\H} \in \Grassm{n,r}$.

\end{lemma}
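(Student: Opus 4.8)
The plan is to construct the factorization explicitly from a basis of the column space of $\X$, and then read off the full-rank and subspace claims from the construction together with the standard submultiplicativity of rank.

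First I would fix a basis $\b_1,\dots,\b_r$ of the $r$-dimensional column space $\Colspace{\X}\in\Grassm{m,r}$ and collect these vectors as the rows of a matrix $\G\in\Fq^{r\times m}$, so that the columns of $\G^T\in\Fq^{m\times r}$ are exactly $\b_1,\dots,\b_r$. By construction $\G$ has rank $r$, i.e.\ it is full rank, and $\Rowspace{\G}=\Colspace{\X}$. Since every column of $\X$ lies in $\Colspace{\X}=\Rowspace{\G}$, each column is a unique $\Fq$-linear combination of the columns of $\G^T$; gathering the corresponding coefficient vectors as the columns of a matrix $\H\in\Fq^{r\times n}$ yields $\X=\G^T\H$.

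It then remains to check that $\H$ is full rank and to identify $\Rowspace{\X}$. For the rank, I would use $r=\rk(\X)=\rk(\G^T\H)\le\rk(\H)\le r$, where the middle inequality is submultiplicativity of rank and the last holds because $\H$ has only $r$ rows; hence $\rk(\H)=r$. For the row space, observe that each row of $\X=\G^T\H$ is an $\Fq$-linear combination of the rows of $\H$, so $\Rowspace{\X}\subseteq\Rowspace{\H}$; since both spaces have dimension $r$ they coincide, giving $\Rowspace{\X}=\Rowspace{\H}\in\Grassm{n,r}$.

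There is no genuine obstacle here---the statement is a classical linear-algebra fact---so the only thing to watch is the bookkeeping of the row/column conventions, in particular that the basis of the column space is stored as the rows of $\G$ (equivalently the columns of $\G^T$) so that the factor $\G^T$ has the correct shape and the identity $\Colspace{\X}=\Rowspace{\G}$ holds on the nose rather than merely up to a dimension count.
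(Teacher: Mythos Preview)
Your argument is correct and is exactly the standard way this classical fact is established. Note, however, that the paper does not supply its own proof of this lemma: it simply quotes the rank decomposition as \cite[Theorem~3.13]{Stewart-MatrixAlgorithms_Book_1998} and uses it as a black box, so there is nothing in the paper to compare your proof against beyond the citation.
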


The next lemma shows a connection between the subspace distance and the rank distance and is a special case of \cite[Theorem~1]{Gadouleau2010ConstantRank}. 
It plays a non-negligible role in the proof of our bounds, therefore, we give the proof for two matrices of same rank and use the subspace distance (in \cite{Gadouleau2010ConstantRank}, the injection distance is used).
\begin{lemma}[Connection Subspace and Rank Distance, \cite{Gadouleau2010ConstantRank}]\label{lem:connection_rankdist_subspacedist}
Let $\X,\Y$ be two matrices in $\Fq^{m\times n}$ with $\rk(\X)=\rk(\Y)$. Then:
\begin{align*}
&\frac{1}{2} \; d_S \big(\Rowspace{\X},\Rowspace{\Y}\big) + \frac{1}{2}\; d_S \big(\Colspace{\X},\Colspace{\Y}\big)\\ 
&\leq d_R (\X,\Y)\\
&\leq \min \left\lbrace \frac{1}{2}\; d_S \big(\Rowspace{\X},\Rowspace{\Y}\big), \frac{1}{2}\; d_S \big(\Colspace{\X},\Colspace{\Y}\big) \right\rbrace\\
&\hspace{7ex}+\rk(\X).
\end{align*}
\end{lemma}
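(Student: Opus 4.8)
The plan is to convert both inequalities into statements about the dimensions of the row- and column-space sums, and then attack the two sides with different elementary tools. Set $r=\rk(\X)=\rk(\Y)$ and abbreviate $\mathcal W := \Rowspace{\X}+\Rowspace{\Y}$ and $\mathcal U := \Colspace{\X}+\Colspace{\Y}$. Since $\Rowspace{\X}$ and $\Rowspace{\Y}$ both have dimension $r$, the definition of the subspace distance gives $\tfrac12\,\Subspacedist{\Rowspace{\X},\Rowspace{\Y}}=\dim\mathcal W-r$, and symmetrically $\tfrac12\,\Subspacedist{\Colspace{\X},\Colspace{\Y}}=\dim\mathcal U-r$. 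Substituting these, the asserted chain is equivalent to $\dim\mathcal W+\dim\mathcal U-2r\le \rk(\X-\Y)\le \min\{\dim\mathcal W,\dim\mathcal U\}$.

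For the upper bound I would argue by containment. Each row of $\X-\Y$ is the difference of a row of $\X$ and a row of $\Y$ and therefore lies in $\mathcal W$; hence $\Rowspace{\X-\Y}\subseteq\mathcal W$ and $\rk(\X-\Y)=\dim\Rowspace{\X-\Y}\le\dim\mathcal W$. The mirror-image argument applied to columns yields $\rk(\X-\Y)\le\dim\mathcal U$, and taking the smaller of the two bounds gives the right-hand inequality.

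The lower bound is the crux, and here I would use the rank decomposition of Lemma~\ref{lem:rank_decomp}. Write $\X=\G_X^T\H_X$ and $\Y=\G_Y^T\H_Y$ with $\G_X,\G_Y\in\Fq^{r\times m}$ and $\H_X,\H_Y\in\Fq^{r\times n}$ of full rank, so that $\Colspace{\X}=\Rowspace{\G_X}$, $\Colspace{\Y}=\Rowspace{\G_Y}$, $\Rowspace{\X}=\Rowspace{\H_X}$, and $\Rowspace{\Y}=\Rowspace{\H_Y}$. The key idea is to express the difference as a single product whose inner dimension is $2r$:
\begin{equation*}
\X-\Y=\big[\,\G_X^T\ \ \G_Y^T\,\big]\begin{pmatrix}\H_X\\[2pt]-\H_Y\end{pmatrix}.
\end{equation*}
The column space of the first (wide) factor equals $\Rowspace{\G_X}+\Rowspace{\G_Y}=\mathcal U$, so its rank is $\dim\mathcal U$; the row space of the second (tall) factor equals $\Rowspace{\H_X}+\Rowspace{\H_Y}=\mathcal W$, so its rank is $\dim\mathcal W$. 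Applying Sylvester's rank inequality $\rk(\A\B)\ge\rk(\A)+\rk(\B)-c$, with $c=2r$ the common inner dimension, gives $\rk(\X-\Y)\ge\dim\mathcal U+\dim\mathcal W-2r$, which is precisely the left-hand inequality.

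I expect the lower bound to be the only real obstacle: the containment argument bounds ranks only from above, so a matching lower bound requires exhibiting $\X-\Y$ through factors that simultaneously capture the full column- and row-space sums. The block factorization achieves exactly this, and the hypothesis $\rk(\X)=\rk(\Y)=r$ is what fixes the inner dimension at $2r$ and makes the translation of the subspace distances into $\dim(\cdot)-r$ exact. Once the factorization is in place, the result is immediate from Sylvester's inequality.
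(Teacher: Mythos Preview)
Your proof is correct and follows essentially the same approach as the paper: both decompose $\X$ and $\Y$ via Lemma~\ref{lem:rank_decomp}, form the single block product $\X-\Y=[\G_X^T\ \G_Y^T]\,\bigl(\begin{smallmatrix}\H_X\\-\H_Y\end{smallmatrix}\bigr)$, identify the ranks of the factors with $\dim\mathcal U$ and $\dim\mathcal W$, and finish with Sylvester's inequality. The only cosmetic difference is that for the upper bound the paper reuses the same factorization and the trivial bound $\rk(\A\B)\le\min\{\rk(\A),\rk(\B)\}$, whereas you argue directly by row/column containment; these are two phrasings of the same observation.
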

\begin{proof}
Let us denote $r \defeq \rk(\X)=\rk(\Y)$.
As in Lemma~\ref{lem:rank_decomp}, we decompose $\X=\C^T\R$ and $\Y=\D^T\S$, where $\C,\D \in \Fq^{r\times m}$ and $\R,\S \in \Fq^{r\times n}$ and all four matrices have full rank.
Hence, $\X-\Y= (\C^T |-\D^T)\cdot (\R^T | \S^T)^T$.
In general, it is well-known that $\rk(\A\B)\leq \min\{\rk(\A),\rk(\B)\}$ and $\rk(\A\B)\geq \rk(\A)+\rk(\B)-n$ when $\A$ has $n$ columns and $\B$ has $n$ rows. Therefore,
\begin{align}
&\rk(\C^T |-\D^T) + \rk(\R^T | \S^T) - 2r,\nonumber\\
&\leq \rk(\X-\Y) =\rk\left((\C^T |-\D^T)\cdot (\R^T | \S^T)^T\right)\label{eq:rank_est_first}\\
&\leq\min\left\{\rk(\C^T |-\D^T), \rk (\R^T | \S^T)\right\}.\nonumber
\end{align}
Let $\Colspace{\C^T}+\Colspace{\D^T}$ denote the smallest subspace containing both column spaces. Then,
\begin{align*}
\rk&(\C^T |-\D^T)\\ 
&= \dim(\Colspace{\C^T}+\Colspace{\D^T})\\
&=\dim(\Colspace{\C^T}+\Colspace{\D^T})\\
&\qquad-\frac{1}{2}\left\{\dim(\Colspace{\C^T})+\dim(\Colspace{\D^T})\right\}\\
&\qquad+\frac{1}{2}\left\{\dim(\Colspace{\C^T})+\dim(\Colspace{\D^T})\right\}\\
&=\frac{1}{2}\; d_S(\Colspace{\C^T},\Colspace{\D^T}) + r\\
&=\frac{1}{2}\; d_S(\Colspace{\X},\Colspace{\Y}) + r,
\end{align*}
and in the same way 
\begin{equation*}
\rk(\R^T | \S^T) = \frac{1}{2} \; d_S(\Rowspace{\X},\Rowspace{\Y}) + r.
\end{equation*}
Inserting this into \eqref{eq:rank_est_first}, the statement follows.
\end{proof}
Lemma~\ref{lem:connection_rankdist_subspacedist} can equivalently be derived from \cite[Equation~(4.3)]{MatsagliaStyan-EqualitiesAndInequalitiesForRanksOfMatrices}, which also results in \eqref{eq:rank_est_first} and with the same reformulations for the subspace distance, we also obtain the result.

For the proof of the upper bound in Theorem~\ref{theo:upper_bound_listsize} (see Section~\ref{sec:upper_bound}), the following upper bound on the maximum cardinality of a constant-rank code is applied. It shows a relation between the maximum cardinalities of a (not necessarily linear) constant-rank and a constant-dimension code.
\begin{proposition}[Maximum Cardinality,\cite{Gadouleau2010ConstantRank}]\label{prop:max_cardinality}
For all $q$ and $1 \leq \delta \leq r \leq n \leq m$, the \textbf{maximum} cardinality of a $\CRC{n,M,d_R = \delta+r,r}$ constant-rank code over $\Fqm$ is upper bounded by the maximum cardinality of a constant-dimension code as follows:
\begin{equation*}
\maxCardinalityRank{n,d_R = \delta+r,r} \leq \maxCardinalitySub{n,d_S = 2\delta,r}.
\end{equation*}
\end{proposition}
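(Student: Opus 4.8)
The plan is to convert a maximum-cardinality constant-rank code into a constant-dimension code of the \emph{same} size by passing to row spaces, and then to read off the claimed inequality. Fix an optimal $\CRC{n,M,d_R=\delta+r,r}$ code $\mycode{C}$, so that $M=\maxCardinalityRank{n,d_R=\delta+r,r}$, and associate with each codeword, viewed as a matrix $\X\in\Fq^{m\times n}$ of rank $r$, its row space $\Rowspace{\X}$. Since every codeword has rank exactly $r$, each image is an $r$-dimensional subspace of $\Fq^n$, so this map lands in the Grassmannian $\Grassm{n,r}$.

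The heart of the argument is to lower bound the subspace distance of two such row spaces using the \emph{upper} bound of Lemma~\ref{lem:connection_rankdist_subspacedist}. For two distinct codewords $\X\neq\Y$ of $\mycode{C}$ we have $\rk(\X)=\rk(\Y)=r$ and $d_R(\X,\Y)\geq \delta+r$. Retaining only the row-space term of the minimum in the right-hand inequality of Lemma~\ref{lem:connection_rankdist_subspacedist} gives $d_R(\X,\Y)\leq \frac{1}{2}\,d_S\big(\Rowspace{\X},\Rowspace{\Y}\big)+r$. Combining the two yields $\delta+r\leq \frac{1}{2}\,d_S\big(\Rowspace{\X},\Rowspace{\Y}\big)+r$, i.e. $d_S\big(\Rowspace{\X},\Rowspace{\Y}\big)\geq 2\delta$.

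Because $\delta\geq 1$, this forces $d_S\big(\Rowspace{\X},\Rowspace{\Y}\big)\geq 2>0$ whenever $\X\neq\Y$, so the row-space map is injective and its image has exactly $M$ elements. That image is therefore a constant-dimension code in $\Grassm{n,r}$ with minimum subspace distance at least $2\delta$, and its size cannot exceed the maximum $\maxCardinalitySub{n,d_S=2\delta,r}$. Chaining these (in)equalities gives $\maxCardinalityRank{n,d_R=\delta+r,r}=M\leq\maxCardinalitySub{n,d_S=2\delta,r}$, as desired. The hypotheses $\delta\leq r\leq n$ ensure that the target parameters $d_S=2\delta\leq 2r$ and dimension $r\leq n$ are admissible for a constant-dimension code, so no degenerate cases arise. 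I expect the only genuine subtlety to be selecting the correct half of Lemma~\ref{lem:connection_rankdist_subspacedist}, namely its upper bound rather than its lower bound, so that the inequality on rank distance transfers into the needed lower bound on subspace distance; the rest is bookkeeping.
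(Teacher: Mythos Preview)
Your proof is correct. The paper does not actually supply its own proof of this proposition; it is quoted from \cite{Gadouleau2010ConstantRank} and used as a black box in the proof of Theorem~\ref{theo:upper_bound_listsize}. Your argument---mapping each rank-$r$ codeword to its row space and invoking the \emph{upper} inequality of Lemma~\ref{lem:connection_rankdist_subspacedist} to deduce $d_S\geq 2\delta$---is exactly the natural route, and it matches the argument in the original Gadouleau--Yan paper.
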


However, the connections between constant-dimension and constant-rank codes are even more far-reaching.
The following proposition shows explicitly how to construct constant-rank codes out of constant-dimension codes and is a generalization of 
\cite[Proposition~3]{Gadouleau2010ConstantRank} to arbitrary cardinalities (in \cite{Gadouleau2010ConstantRank} both constant-dimension codes used in the construction have the same cardinality).
\begin{proposition}[Construction of a Constant-Rank Code]\label{prop:gado_cdc_crc}
Let $\mycode{M}$ be a $\CDC{m,|\mycode{M}|,d_{S,M},r}$ and $\mycode{N}$ be a $\CDC{n,|\mycode{N}|,d_{S,N},r}$ constant-dimension code with $r \leq \min\{n,m\}$ and cardinalities $|\mycode{M}|$ and $|\mycode{N}|$.
Then, there exists a $\CRC{n,M_R,d_R,r}$ constant-rank code $\mycode{C}$ of cardinality $M_R=\min\{|\mycode{M}|,|\mycode{N}|\}$ with $\Colspace{\mycode{C}} \subseteq \mycode{M}$ and $\Rowspace{\mycode{C}} \subseteq \mycode{N}$. Furthermore, the minimum rank distance $d_R$ of $\mycode{C}$ is
\begin{equation*}
d_R \geq \frac{1}{2}\; d_{S,M} +\frac{1}{2}\; d_{S,N},
\end{equation*}
and if $|\mycode{M}|=|\mycode{N}|$ additionally:
\begin{equation*}
d_R \leq\frac{1}{2} \min\{d_{S,M},d_{S,N}\} + r.
\end{equation*}
\end{proposition}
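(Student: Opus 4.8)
The plan is to construct the constant-rank code $\mycode{C}$ explicitly by pairing up column spaces from $\mycode{M}$ with row spaces from $\mycode{N}$ via the rank decomposition of Lemma~\ref{lem:rank_decomp}, and then to read off the distance bounds from Lemma~\ref{lem:connection_rankdist_subspacedist}. First I would set $M_R \defeq \min\{|\mycode{M}|,|\mycode{N}|\}$ and fix an arbitrary injection $\phi$ that matches $M_R$ of the codewords (subspaces) of $\mycode{M}$ with $M_R$ of the codewords of $\mycode{N}$; call the matched pairs $(\myspace{U}_i,\myspace{V}_i)$ with $\myspace{U}_i \in \mycode{M} \subseteq \Grassm{m,r}$ and $\myspace{V}_i \in \mycode{N} \subseteq \Grassm{n,r}$, for $i=1,\dots,M_R$. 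This step is what generalizes \cite[Proposition~3]{Gadouleau2010ConstantRank} to unequal cardinalities: the earlier result assumed a bijection, and here we simply discard the surplus subspaces of the larger code, keeping an arbitrary matching of size $M_R$.

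Next, for each matched pair I would build a matrix of rank exactly $r$ whose column space is $\myspace{U}_i$ and whose row space is $\myspace{V}_i$. Concretely, choose $\G_i \in \Fq^{r \times m}$ of full row rank with $\Rowspace{\G_i} = \myspace{U}_i$ and $\H_i \in \Fq^{r \times n}$ of full row rank with $\Rowspace{\H_i} = \myspace{V}_i$, and set $\X_i \defeq \G_i^T \H_i$. By Lemma~\ref{lem:rank_decomp}, $\X_i$ has rank $r$ with $\Colspace{\X_i} = \Rowspace{\G_i} = \myspace{U}_i$ and $\Rowspace{\X_i} = \Rowspace{\H_i} = \myspace{V}_i$. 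Defining $\mycode{C} \defeq \{\X_1,\dots,\X_{M_R}\}$ then gives a set of $M_R$ matrices, all of rank $r$, with $\Colspace{\mycode{C}} \subseteq \mycode{M}$ and $\Rowspace{\mycode{C}} \subseteq \mycode{N}$ by construction; I should remark that the $\X_i$ are pairwise distinct because distinct pairs $(\myspace{U}_i,\myspace{V}_i)$ differ in at least one of the two spaces, hence the matrices differ, so $|\mycode{C}| = M_R$ and $\mycode{C}$ is a genuine constant-rank code.

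The distance bounds then follow directly from Lemma~\ref{lem:connection_rankdist_subspacedist} applied to any two codewords $\X_i,\X_j$ (same rank $r$). For the lower bound, the first inequality of that lemma gives
\begin{equation*}
d_R(\X_i,\X_j) \geq \tfrac{1}{2}\,d_S(\Rowspace{\X_i},\Rowspace{\X_j}) + \tfrac{1}{2}\,d_S(\Colspace{\X_i},\Colspace{\X_j}) \geq \tfrac{1}{2}\,d_{S,N} + \tfrac{1}{2}\,d_{S,M},
\end{equation*}
since the row spaces lie in $\mycode{N}$ (distance at least $d_{S,N}$) and the column spaces lie in $\mycode{M}$ (distance at least $d_{S,M}$); minimizing over all pairs yields the claimed bound on $d_R$. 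For the upper bound, assuming $|\mycode{M}|=|\mycode{N}|$, I would pick the pair of $\mycode{M}$-codewords achieving $d_{S,M}$ and the pair of $\mycode{N}$-codewords achieving $d_{S,N}$; here the freedom in choosing the matching $\phi$ lets me arrange a single pair $\X_i,\X_j$ realizing a minimizing subspace distance, and the second inequality of Lemma~\ref{lem:connection_rankdist_subspacedist} bounds its rank distance by $\tfrac{1}{2}\min\{d_{S,M},d_{S,N}\}+r$, so the minimum distance of $\mycode{C}$ is at most this quantity. The main obstacle I anticipate is precisely this upper-bound step: the lower bound is uniform over all matchings, but the upper bound requires the matching to be chosen so that the extremal subspace pairs are actually paired together, which is only guaranteed to be arrangeable when the two codes have equal cardinality — explaining why the hypothesis $|\mycode{M}|=|\mycode{N}|$ appears only for the second inequality.
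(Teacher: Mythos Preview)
Your proposal is correct and follows essentially the same approach as the paper: build $\X_i = \G_i^T\H_i$ from matched pairs of subspaces and read off both distance bounds from Lemma~\ref{lem:connection_rankdist_subspacedist}. One minor simplification relative to your write-up: for the upper bound you do not actually need the freedom in choosing $\phi$ --- once $|\mycode{M}|=|\mycode{N}|$, every subspace of $\mycode{N}$ (respectively $\mycode{M}$) appears as some $\Rowspace{\X_i}$ (respectively $\Colspace{\X_i}$), so a pair realizing $d_{S,N}$ (respectively $d_{S,M}$) is automatically present, and applying the upper inequality of Lemma~\ref{lem:connection_rankdist_subspacedist} to each such pair separately already yields $d_R \leq \tfrac{1}{2}\min\{d_{S,M},d_{S,N}\}+r$.
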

\begin{proof}
Let $\G_i \in \Fq^{r \times m}$ and $\H_i\in \Fq^{r \times n}$ for $i=1,\dots,\min\{|\mycode{M}|,|\mycode{N}|\}$ be full-rank matrices, whose row spaces are $\min\{|\mycode{M}|,|\mycode{N}| \}$
codewords (which are subspaces themselves) of $\mycode{M}$ and $\mycode{N}$, respectively.

Let $\mycode{C}$ be a $\CRC{n,M_R,d_R,r_R}$ constant-rank code, defined by the set of codewords $\A_i = \G_i^T \H_i$ for $i=1,\dots,M_R$, where $M_R=\min\{|\mycode{M}|,|\mycode{N}| \}$.
All such codewords $\A_i$ are distinct, since the row spaces of all $\G_i$, respectively $\H_i$, are different.
These codewords $\A_i$ are $m\times n$ matrices of rank exactly $r_R = r$ since $\G_i \in \Fq^{r \times m}$ and $\H_i\in \Fq^{r \times n}$ have rank $r$.
The cardinality is $|\mycode{C}|=\min\{ |\mycode{M}|,|\mycode{N}|\}$ and $\Colspace{\mycode{C}} \subseteq \mycode{M}$ and $\Rowspace{\mycode{C}} \subseteq \mycode{N}$ by Lemma~\ref{lem:rank_decomp}.

The lower bound on the minimum rank distance follows with Lemma~\ref{lem:connection_rankdist_subspacedist} for two different $\A_i,\A_j$:
\begin{align*}
d_R &\geq \frac{1}{2} \; d_S (\Rowspace{\A_i},\Rowspace{\A_j}) + \frac{1}{2}\; d_S (\Colspace{\A_i},\Colspace{\A_j})\\
&\geq\frac{1}{2}\; d_{S,N} + \frac{1}{2}\; d_{S,M}.
\end{align*}
If $|\mycode{M}|=|\mycode{N}|$, there exist two matrices $\A_i,\A_j$ such that $d_S (\Rowspace{\A_i},\Rowspace{\A_j})= d_{S,N}$. Then,  Lemma~\ref{lem:connection_rankdist_subspacedist} gives $d_R \leq d_{S,N} +r $. If we choose $\A_i$ and $\A_j$ such that $d_S (\Colspace{\A_i},\Colspace{\A_j}) = d_{S,M}$, then $d_R\leq d_{S,M} +r $ and the statement follows.
\end{proof}

\subsection{Constant-Dimension Codes from Lifted MRD Codes}
%

The maximum cardinality of constant-dimension codes and explicit constructions of codes with high cardinality have been investigated in several papers, see \cite{Wang2003Linear,koetter_kschischang,silva_rank_metric_approach,Xia2009Johnson,Etzion2009ErrorCorrecting,Skachek2010Recursive,SilbersteinEtzion-EnumerativeCodingForGrassmannianSpace-2011,Etzion2011ErrorCorrecting,Bachoc2012Bounds,EtzionSilberstein-CodesDesignsRelLiftedMRD-2012}.
However, for our application, the construction from \cite{silva_rank_metric_approach} based on lifted MRD codes (e.g. Gabidulin codes) is sufficient. 
These constant-dimension codes are shown for some explicit parameters in Lemma~\ref{lem:subspacecode_lifted_mrd} and Corollary~\ref{cor:subspacecode_lifted_mrd_dodd}, where the \emph{lifting} is defined as follows.

\begin{definition}[Lifting of a Matrix or a Code, \cite{silva_rank_metric_approach}]
Consider the mapping
\begin{align*}
\Fq^{r \times (n-r)} &\rightarrow \Grassm{n,r}\\
\X &\mapsto\myspace{I}(\X)= \Rowspace{[\I_r \ \X]},
\end{align*}
where $\I_r$ denotes the $r \times r $ identity matrix.
The subspace $\myspace{I}(\X)$ is called \textbf{lifting} of the matrix $\X$. If we apply this map on all code matrices of a block code $\mycode{C}$, then the constant-dimension code $\myspace{I}(\mycode{C})$ is called lifting of $\mycode{C}$.
\end{definition}

\begin{lemma}[Lifted MRD Code, \cite{silva_rank_metric_approach}]\label{lem:subspacecode_lifted_mrd}
Let $d$ be an even integer, let $n/2 \geq \tau \geq d/2$. 
Let a linear $\MRDlinear{\tau,\tau-d/2+1}$ code $\mycode{C}$ over $\mathbb{F}_{q^{n-\tau}}$ of length $\tau$, minimum rank distance $d_R = d/2$ and cardinality $M_R$ be given.
  
Then, the lifting of the transposed codewords, i.e.,
\begin{equation*}
\myspace{I}(\mycode{C}^T)\defeq\big\{\myspace{I}(\C^T) = \Rowspace{[\I_{\tau} \ \C^T]} : \C \in \mycode{C}\big\}
\end{equation*}
is a $\CDC{n,M_S,d_S,\tau}$ constant-dimension  code of cardinality $M_S=M_R=q^{(n-\tau)(\tau-d/2+1)}$, minimum subspace distance $d_S=d$ and lies in the Grassmannian $\Grassm{n,\tau}$.
\end{lemma}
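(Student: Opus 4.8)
The plan is to verify the three asserted properties of $\myspace{I}(\mycode{C}^T)$ --- its ambient Grassmannian, its cardinality, and its minimum subspace distance --- in turn, the distance being the only substantial point. First I would pin down the dimensions. Viewed over the base field $\Fq$, each codeword $\C \in \mycode{C}$ (a length-$\tau$ vector over $\mathbb{F}_{q^{n-\tau}}$) is a matrix in $\Fq^{(n-\tau)\times\tau}$, so $\C^T \in \Fq^{\tau\times(n-\tau)}$ and the concatenation $[\I_{\tau} \ \C^T]$ is a $\tau \times n$ matrix. The leading $\I_{\tau}$ block forces full row rank $\tau$, so $\myspace{I}(\C^T) = \Rowspace{[\I_{\tau} \ \C^T]}$ is a $\tau$-dimensional subspace of $\Fq^n$, placing the code in $\Grassm{n,\tau}$ as claimed.

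Next, for the cardinality I would show the lifting map $\C \mapsto \myspace{I}(\C^T)$ is injective: because $[\I_{\tau} \ \C^T]$ is already in reduced row echelon form, its row space determines $\C^T$, and hence $\C$, uniquely. Therefore $M_S = M_R$. The value $M_R = q^{(n-\tau)(\tau-d/2+1)}$ then follows from the MRD cardinality formula $M = q^{mk}$ applied with extension degree $n-\tau$ playing the role of $m$, length $\tau$, and dimension $k = \tau - d/2 + 1$; I would note that the hypothesis $\tau \leq n/2$ gives $\tau \leq n-\tau$, so this MRD code is well-defined and has minimum rank distance $\tau - k + 1 = d/2$, consistent with the statement.

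The heart of the argument is the subspace distance, and here I would compute directly rather than invoke Lemma~\ref{lem:connection_rankdist_subspacedist}. For two distinct codewords $\C_1, \C_2 \in \mycode{C}$, the sum $\myspace{I}(\C_1^T) + \myspace{I}(\C_2^T)$ is the row space of the stacked $2\tau \times n$ matrix with row-blocks $[\I_{\tau} \ \C_1^T]$ and $[\I_{\tau} \ \C_2^T]$; subtracting the first block of rows from the second reduces it to the blocks $[\I_{\tau} \ \C_1^T]$ and $[\0 \ \C_2^T - \C_1^T]$, so $\dim\big(\myspace{I}(\C_1^T) + \myspace{I}(\C_2^T)\big) = \tau + \rk(\C_2^T - \C_1^T)$. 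Since each summand has dimension $\tau$, the definition of the subspace distance yields
\begin{equation*}
d_S\big(\myspace{I}(\C_1^T),\myspace{I}(\C_2^T)\big) = 2\big(\tau + \rk(\C_2^T - \C_1^T)\big) - 2\tau = 2\,\rk(\C_1 - \C_2),
\end{equation*}
using $\rk(\C_2^T - \C_1^T) = \rk(\C_1 - \C_2)$. As $\mycode{C}$ has minimum rank distance $d/2$, minimizing over distinct pairs gives $d_S = 2\cdot(d/2) = d$.

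The only step requiring genuine care is the row-reduction computation of $\dim\big(\myspace{I}(\C_1^T) + \myspace{I}(\C_2^T)\big)$, which furnishes the standard identity $d_S = 2\,d_R$ for lifted codes; everything else is bookkeeping with the MRD parameters. I do not expect a real obstacle, but I would be explicit about the transpose bookkeeping --- code matrices live in $\Fq^{(n-\tau)\times\tau}$ while their transposes feed the lifting in $\Fq^{\tau\times(n-\tau)}$ --- precisely so that the two roles of the dimensions $\tau$ and $n-\tau$ are not conflated.
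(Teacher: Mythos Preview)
Your proposal is correct and follows essentially the same structure as the paper's proof: verify the dimension of each lifted subspace, match the cardinality to that of the MRD code, and note the constraint $\tau \leq n-\tau$. The only difference is that for the distance claim the paper simply cites \cite[Proposition~4]{silva_rank_metric_approach} for the identity $d_S = 2\,d_R$, whereas you reprove it directly via the row-reduction of the stacked matrix; your computation is precisely the standard argument behind that cited proposition, so the two proofs are substantively the same.
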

\begin{proof} 
Let $\C_i \in \Fq^{(n-\tau)\times \tau}$, for $i=1,\dots, M_R$, denote the codewords of \mycode{C} in matrix representation.
The dimension of each subspace $\myspace{I}(\C_i^T)$ is $\tau$ since $\rk([\I_{\tau} \ \C_i^T]) = \tau$ for all $i=1,\dots,M_R$. The cardinality of this constant-dimension code is the same as the cardinality of the MRD code, which is $M_R=q^{(n-\tau)(\tau-d/2+1)}$. 
The subspace distance of the constant-dimension code is two times the rank distance of the MRD code (see \cite[Proposition~4]{silva_rank_metric_approach}). 
The restriction $\tau \leq n-\tau$ has to hold since the length of the MRD code has to be at most the extension degree of the finite field.
\end{proof}

Corollary~\ref{cor:subspacecode_lifted_mrd_dodd} shows how constant-dimension codes can be constructed from MRD codes when $d$ is odd. 
\begin{corollary}[Lifted MRD Codes, Case 2]\label{cor:subspacecode_lifted_mrd_dodd}
Let $d$ be an odd integer and let $\tau \geq \nicefrac{(d-1)}{2} +1$.  
Then, 
\begin{itemize}
\item for $\tau \leq m-\tau$ and for a linear $\MRDlinear{\tau,\tau-\nicefrac{(d-1)}{2}+1}$ code $\mycode{C}$ over $\mathbb{F}_{q^{m-\tau}}$, the lifting $\myspace{I}(\mycode{C}^T)$ is a $\CDC{m,M_S,d_S = d-1,\tau}$ constant-dimension code  of cardinality $M_S = q^{(m-\tau)(\tau-\nicefrac{(d-1)}{2}+1)}$,
\item for $\tau \leq n-\tau$, $n \leq m$ and for a linear $\MRDlinear{\tau,\tau-\nicefrac{(d+1)}{2}+1}$ code $\mycode{C}$ over $\mathbb{F}_{q^{n-\tau}}$, the lifting $\myspace{I}(\mycode{C}^T)$ is a $\CDC{n,M_S,d_S = d+1,\tau}$ constant-dimension code of cardinality $M_S=q^{(n-\tau)(\tau-\nicefrac{(d+1)}{2}+1)}=q^{(n-\tau)(\tau-\nicefrac{(d-1)}{2})} <q^{(m-\tau)(\tau-\nicefrac{(d-1)}{2}+1)} $.
\end{itemize}
\end{corollary}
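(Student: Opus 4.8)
The plan is to deduce both items directly from the even-distance construction in Lemma~\ref{lem:subspacecode_lifted_mrd}, exploiting that for odd $d$ both neighbouring integers $d-1$ and $d+1$ are even and hence admissible as the subspace distance produced by that lemma. For the first item I would invoke Lemma~\ref{lem:subspacecode_lifted_mrd} with its even distance parameter set to $d-1$ and its ambient length set to $m$: the lemma then consumes a linear $\MRDlinear{\tau,\tau-(d-1)/2+1}$ code over $\mathbb{F}_{q^{m-\tau}}$ of minimum rank distance $(d-1)/2$, and its lifted transpose is a $\CDC{m,M_S,d_S=d-1,\tau}$ code of cardinality $q^{(m-\tau)(\tau-(d-1)/2+1)}$. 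Before applying the lemma I have to confirm its hypotheses, namely $m/2 \geq \tau \geq (d-1)/2$; the upper inequality is exactly the assumed $\tau \leq m-\tau$, and the lower inequality follows from $\tau \geq \nicefrac{(d-1)}{2}+1 > (d-1)/2$.

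For the second item the argument is symmetric, now rounding the distance upward: I would apply Lemma~\ref{lem:subspacecode_lifted_mrd} with even distance parameter $d+1$ and ambient length $n$. This requires a linear $\MRDlinear{\tau,\tau-(d+1)/2+1}$ code over $\mathbb{F}_{q^{n-\tau}}$ of minimum rank distance $(d+1)/2$, and produces a $\CDC{n,M_S,d_S=d+1,\tau}$ code of cardinality $q^{(n-\tau)(\tau-(d+1)/2+1)}$, which I would rewrite as $q^{(n-\tau)(\tau-(d-1)/2)}$ by simplifying the exponent. The hypotheses $n/2 \geq \tau \geq (d+1)/2$ are again met, since $\tau \leq n-\tau$ is assumed and $\tau \geq \nicefrac{(d-1)}{2}+1 = (d+1)/2$.

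The only remaining point is the strict cardinality comparison claimed at the end of the second item. I would establish it by comparing the two exponents factor by factor: because $n \leq m$ we have $n-\tau \leq m-\tau$, and trivially $\tau-(d-1)/2 < \tau-(d-1)/2+1$, while both factors on each side are strictly positive given $\tau \geq \nicefrac{(d-1)}{2}+1$ and $\tau \leq m-\tau$. Multiplying these inequalities between positive quantities yields $(n-\tau)(\tau-(d-1)/2) < (m-\tau)(\tau-(d-1)/2+1)$, and raising $q$ to both sides gives the asserted strict inequality. I do not expect any genuine obstacle: the corollary is essentially a bookkeeping consequence of the lemma, and the only mild care needed is in checking positivity of the factors so that the strictness survives the multiplication of exponents.
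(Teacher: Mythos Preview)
Your proposal is correct and follows exactly the approach the paper intends: the corollary is stated without proof precisely because it is obtained by applying Lemma~\ref{lem:subspacecode_lifted_mrd} with the even parameters $d-1$ (ambient length $m$) and $d+1$ (ambient length $n$), which is what you do. Your hypothesis checks and the factor-by-factor verification of the strict cardinality inequality are all fine.
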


Lifted MRD codes are said to be \emph{asymptotically} optimal constant-dimension codes since 
the ratio of their cardinality to the upper bounds is a constant \cite{silva_rank_metric_approach}. There are constant-dimension codes of higher cardinality, e.g. the construction from \cite{Etzion2009ErrorCorrecting}. However, for our approach, lifted MRD codes are sufficient, since scalar factors do not change the asymptotic behavior and since such constant-dimension codes exist for any $\tau$ and $d$ when $\tau \leq n-\tau$.

\subsection{Problem Statement}
We analyze the question of \emph{polynomial-time list decodability} of rank-metric codes. 
Thus, we want to bound the maximum number of codewords in a ball of radius $\tau$ around a received word $\r$. This number is called the maximum \emph{list size} $\ell$ in the following. 
The worst-case complexity of a possible list decoding algorithm directly depends on $\ell$.


\begin{problem}[Maximum List Size]
Let $\mycode{C}$ be an $\code{n,M,d}$ code over $\Fqm$ of length $n \leq m$, cardinality $M$ and minimum rank distance $d_R = d$. Let $\tau < d$. 
Find lower and upper bounds on the maximum number of codewords $\ell$ in a ball of 
rank radius $\tau$ around a word $\r= (r_0 \ r_1 \ \dots \ r_{n-1}) \in \Fqm^n$. Hence, find a bound on
\begin{equation*}
\ell \defeq \ell\big(m,n,d,\tau\big) \defeq \max_{\r \in \Fqm^n}\Big\{\big|\mycode{C} \cap \Ball{\tau}{\r}\big|\Big\}.
\end{equation*}
\end{problem}
When the paramters $m,n,d,\tau$ are clear from the context, we use the short-hand notation $\ell$ for the maximum list size.
For an upper bound on $\ell$, we have to show that the bound holds for \emph{any} received word $\r$, 
whereas for a lower bound on $\ell$ it is sufficient to show that there exists (at least) one $\r$ for which 
this bound on the list size is valid.

Moreover, if we restrict $\mycode{C}$ to be a Gabidulin code rather than an arbitrary rank-metric code, the task becomes more difficult due to the additional imposed structure of the code.

Let us denote the list of all codewords of an $\code{n,M,d}$ code $\mycode{C}$ in the ball of rank radius $\tau$ around a given word $\r$ by:
\begin{align}
\List\big(\mycode{C},&\r\big) \defeq \mycode{C} \cap \Ball{\tau}{\r}\label{eq:def_listsize} \\
&= \big\{ \c_1,\c_2,\dots,\c_{|\List|} : \c_i \in \mycode{C} \; \text{and}\; \rk(\r-\c_i) \leq \tau, \; \forall i \big\}. \nonumber
\end{align}
Clearly, the cardinality is $|\List| \defeq |\List\big(\mycode{C},\r\big)| \leq \ell$. 


\section{A Lower Bound on the List Size of\\ Gabidulin Codes}\label{sec:bound_gabidulin}
In this section, we provide a lower bound on the list size when decoding Gabidulin codes. 
The proof is
based on the evaluation of linearized polynomials and is inspired by Justesen and H\o{}holdt's \cite{Justesen2001Bounds} and 
Ben-Sasson, Kopparty, and Radhakrishna's \cite{BenSasson2010Subspace} approaches for bounding the list size of Reed--Solomon codes. 
\begin{theorem}[Bound I: Lower Bound on the List Size]\label{theo:lower_bound_gabidulin}
Let the linear Gabidulin code $\Gab{n,k}$ over $\Fqm$ with $n \leq m$ and $d_R=d=n-k+1$ be given. Let $\tau < d$. 
Then, there exists a word $\r \in \Fqm^n$ such that the maximum list size $\ell$ satisfies
\begin{align}
 \ell = \ell\big(m,n,d,\tau\big) &\geq\big|\Gab{n,k} \cap \Sphere{\tau}{\r}\big|\geq \frac{\quadbinom{n}{n-\tau}}{(q^m)^{n-\tau-k}} \nonumber\\
&\geq q^m q^{\tau(m+n) -\tau^2-md},\label{eq:listsize_polys}
 \end{align}
 and for the special case of $n=m$: 
 \begin{equation*}
 \ell \geq q^n q^{2n\tau - \tau^2 - nd}.
 \end{equation*}
\end{theorem}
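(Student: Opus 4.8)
The plan is to work entirely in the language of linearized polynomials and to build the worst-case word $\r$ explicitly by a pigeonhole argument on annihilator polynomials. First I would fix $\myspace{V} = \langle \alpha_0, \dots, \alpha_{n-1}\rangle_{\Fq}$, the $n$-dimensional $\Fq$-space spanned by the evaluation points, and write $\mathrm{ev}(h)$ for the vector $(h(\alpha_0) \ \dots \ h(\alpha_{n-1}))$. The key identity is that for any linearized polynomial $h$ over $\Fqm$, the rank weight of $\mathrm{ev}(h)$ equals $\dim_{\Fq} h(\myspace{V}) = n - \dim_{\Fq}(\ker h \cap \myspace{V})$, since $h$ acts $\Fq$-linearly. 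Writing $\r = \mathrm{ev}(g)$ for the unique interpolation polynomial $g$ with $\deg_q g < n$, and $\c = \mathrm{ev}(f)$ with $\deg_q f < k$, this turns the condition $\rk(\r - \c) = \tau$ into the purely algebraic requirement $\dim_{\Fq}(\ker(g-f)\cap\myspace{V}) = n-\tau$.

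Next I would parametrize the $(n-\tau)$-dimensional subspaces of $\myspace{V}$ by their annihilator polynomials. For each $W \in \Grassm{n,n-\tau}$ with $W \subseteq \myspace{V}$, let $M_W$ be the unique \emph{monic} linearized polynomial of $q$-degree $n-\tau$ vanishing exactly on $W$; then $\ker M_W \cap \myspace{V} = W$, so $\mathrm{ev}(M_W)$ has rank exactly $\tau$, and the map $W \mapsto M_W$ is injective. The main observation is that if $g - f = M_W$ for a codeword polynomial $f$, then $\c = \r - \mathrm{ev}(M_W)$ lies on $\Sphere{\tau}{\r}$; moreover $f = g - M_W$ has $q$-degree $< k$ precisely when $g$ and $M_W$ agree in all coefficients of degrees $[k], [k+1], \dots, [n-\tau]$, so that the top terms cancel.

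The crux is the counting. There are $\quadbinom{n}{n-\tau}$ subspaces $W$, and each determines a coefficient tuple $(M_{W,k}, \dots, M_{W,n-\tau-1}) \in \Fqm^{\,n-\tau-k}$ (the leading coefficient being $1$ by monicity). By pigeonhole, some tuple is shared by at least $\quadbinom{n}{n-\tau}/(q^m)^{n-\tau-k}$ of the $W$. I would then \textbf{define} $\r = \mathrm{ev}(g)$, where $g$ has $q$-degree $n-\tau$, coefficients in degrees $[k], \dots, [n-\tau]$ equal to that shared tuple (top coefficient $1$), and lower coefficients arbitrary (say zero). For every $W$ in this pigeonhole class, $f = g - M_W$ is a valid Gabidulin codeword polynomial, the resulting codewords are pairwise distinct since distinct $W$ yield distinct $M_W$, and each lies on $\Sphere{\tau}{\r}$. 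This gives $\ell \geq |\Gab{n,k} \cap \Sphere{\tau}{\r}| \geq \quadbinom{n}{n-\tau}/(q^m)^{n-\tau-k}$, the first inequality.

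The two closed-form bounds then follow by routine estimation: applying $\quadbinom{n}{n-\tau} \geq q^{(n-\tau)\tau}$ from \eqref{eq:bounds_gaussian_binomial} and substituting $k = n-d+1$ (so that $n-\tau-k = d-1-\tau$) collapses the exponent to $m + \tau(m+n) - \tau^2 - md$, and setting $m=n$ produces the stated special case. I expect the main obstacle to lie not in any single calculation but in making the pigeonhole step airtight, namely verifying that $M_W$ has $q$-degree exactly $n-\tau$ with root space exactly $W$ inside $\myspace{V}$ (so that the rank is exactly $\tau$ and the codewords land on the sphere rather than merely in the ball), and that the degree bookkeeping for $g - M_W$ genuinely yields $q$-degree $< k$. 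The hypothesis $\tau < d$, equivalently $k \leq n-\tau$, is precisely what guarantees $n-\tau-k \geq 0$ and makes this cancellation range nonempty.
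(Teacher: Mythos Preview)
Your proposal is correct and follows essentially the same route as the paper: count the $\quadbinom{n}{n-\tau}$ monic linearized annihilators of $(n-\tau)$-dimensional subspaces of the evaluation span, pigeonhole on their top $n-\tau-k$ free coefficients, and take $\r$ to be the evaluation of a polynomial sharing those coefficients so that subtracting any annihilator in the chosen class yields a codeword at rank distance exactly $\tau$. The only cosmetic differences are that the paper picks $\r$ as the evaluation of one of the annihilators in the pigeonhole class (rather than building a separate $g$ with zeroed low-order coefficients), and that the paper phrases the root space as lying in $\Fqn$ whereas you work directly with $\myspace{V}=\langle\alpha_0,\dots,\alpha_{n-1}\rangle_{\Fq}$; your formulation is arguably cleaner on both points.
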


 \begin{proof}
 Since we assume $\tau < d = n-k+1$, also $ k-1 < n-\tau $ holds. 
Let us consider all monic linearized polynomials of $q$-degree exactly $n-\tau$ whose root spaces have dimension 
 $n-\tau$ and all roots lie in $\Fqn$. There are exactly (see e.g. \cite[Theorem 11.52]{Berlekamp1984Algebraic})
 $\quadbinom{n}{n-\tau}$
such polynomials. 

Now, let us consider a subset of these polynomials, denoted by $\mathcal P$: all polynomials where
the $q$-monomials of
$q$-degree greater than or equal to $k$ have the same coefficients. 
Due to the pigeonhole principle, there exist coefficients such that the number of 
 such polynomials is 
 \begin{equation*}
|\mathcal P| \geq\frac{\quadbinom{n}{n-\tau}}{(q^m)^{n-\tau-k}},
 \end{equation*}
since there are $(q^m)^{n-\tau-k}$ possibilities to choose the 
  highest $n-\tau -(k-1)$ coefficients of a \emph{monic} linearized polynomial with coefficients $\Fqm$. 
 
Note that 
the difference of any two polynomials in $\mathcal P$ is a linearized polynomial of $q$-degree
strictly less than $k$ and therefore the evaluation polynomial of a codeword of $\Gab{n,k}$. 

Let $\r$ be the evaluation of $p(x) \in \mathcal P$ at a basis $\mathcal A = \{\alpha_0,\alpha_1,\dots,\alpha_{n-1}\}$ of $\Fqn$ over $\Fq$:
 \begin{equation*}
 \r = (r_0 \ r_1 \ \dots \ r_{n-1}) = (p(\alpha_0) \ p(\alpha_1) \ \dots \ p(\alpha_{n-1})).
 \end{equation*}
Further, let also $q(x) \in \mathcal P$, then 
 $p(x)-q(x)$ has $q$-degree less than $k$. Let
 $\c$ denote the evaluation of $p(x)-q(x)$ at $\mathcal A$.
Then, $\r-\c$ is the evaluation of $p(x)-p(x)+q(x) = q(x) \in \mathcal P$, whose 
root space has dimension $n-\tau$ and all roots lie in $\Fqn$.  
 Thus, $\dim \ker(\r-\c) = n-\tau$ and $\dim \Colspace{\r-\c} = \rk(\r-\c) = \tau$. 
 
 Therefore, for \emph{any} $q(x) \in \mathcal P$, the evaluation of $p(x)-q(x)$
 is a codeword of $\Gab{n,k}$ and has rank distance $\tau$ from $\r$. 
 Hence,
 \begin{equation*}
\big|\Gab{n,k} \cap  \Sphere{\tau}{\r}\big| \geq |\mathcal{P}|.
 \end{equation*}
Using \eqref{eq:bounds_gaussian_binomial}, this provides the following lower bound on the maximum list size:
 \begin{equation*}
 \ell \geq |\mathcal{P}|
 \geq \frac{q^{(n-\tau)\tau}}{(q^m)^{n-\tau-k}} 
\geq q^m q^{\tau(m+n) -\tau^2-md},
 \end{equation*}
 and for $n=m$ the special case follows.
 \end{proof}
 This lower bound is valid for any $\tau < d$, but we want to know, which is the smallest value for $\tau$ such that this expression grows \emph{exponentially} in $n$.
 
  For arbitrary $n\leq m$, we can rewrite \eqref{eq:listsize_polys} by
  \begin{equation*}
  \ell \geq q^{m(1-\epsilon)} \cdot q^{\tau(m+n) -\tau^2-m(d-\epsilon)},
  \end{equation*}
  where the first part is exponential in $n \leq m$ for any $0 \leq \epsilon < 1$. 
  The second exponent is positive for 
   \begin{equation*}
  \tau \geq \frac{m+n}{2}- \sqrt{\frac{(m+n)^2}{4}-m(d-\epsilon)}\defeq \tau_{J}^*.
   \end{equation*}
 For $n=m$, 
this simplifies to
  \begin{equation*}\label{eq:tau_lessjohnson}
 \tau \geq n- \sqrt{n(n-d+\epsilon)} \defeq \tau_{J}.
  \end{equation*}
Therefore, our lower bound \eqref{eq:listsize_polys} 
shows that the maximum list size is exponential in $n$ for any $\tau \geq \tau_{J}^*$.
For $n=m$, the value $\tau_J$ is basically the Johnson radius 
for codes in Hamming metric. 
Notice that Faure obtained a similar result in \cite{Faure2006Average} by using probabilistic arguments.

This reveals a difference between the known limits to list decoding of Gabidulin and Reed--Solomon codes. 
For Reed--Solomon codes, polynomial-time list decoding up to the Johnson radius can be accomplished 
by the Guruswami--Sudan algorithm. 
However, it is not proven that the Johnson radius is tight for Reed--Solomon codes, i.e., it is not known if the list size is polynomial in $n$ between the Johnson radius and the known exponential lower bounds (see e.g. \cite{Justesen2001Bounds,BenSasson2010Subspace}).

\begin{remark}[Alternative Proof]

The result of Theorem~\ref{theo:lower_bound_gabidulin} can also be obtained by interpreting the decoding list as a constant-rank code as in Subsection~\ref{sec:connection_crc_list}. 
For this purpose, we can use \cite[Lemma~2]{Gadouleau2010ConstantRank} as follows.

Let $\mycode{C}$ be a $\Gab{n,n-d+1}$ of minimum rank distance $d$ and $\mycode{B}$ be a $\Gab{n,d-\tau}$ code of minimum rank distance $n-d+\tau+1$.
Let $\mycode{C}$ be defined as in Definition~\ref{def:gabidulin_code} with the elements $\alpha_0,\alpha_1,\dots,\alpha_{n-1} \in \Fqm$, which are linearly independent over $\Fq$, and let
$\mycode{B}$ be defined with $\alpha_0^{[n-d+1]}, \alpha_1^{[n-d+1]}, \dots, \alpha_{n-1}^{[n-d+1]}$. The corresponding generator matrices are denoted by $\G_{\mycode{C}}$ and $\G_{\mycode{B}}$.

Then, the direct sum code $\mycode{C} \oplus \mycode{B}$ has the generator matrix $(\G_{\mycode{C}}^T \ \G_{\mycode{B}}^T)^T$ and is a $\Gab{n,n-\tau+1}$ code with minimum rank distance $\tau$.

The rank weight distribution of MRD codes can be found in \cite[Section~3]{Gabidulin_TheoryOfCodes_1985} and therefore the number of codewords of rank $\tau$ in $\mycode{C} \oplus \mycode{B}$ is
\begin{equation*}
W_{\tau}(\mycode{C} \oplus \mycode{B}) = \quadbinom{n}{\tau} (q^m-1).
\end{equation*}
The cardinality of the code $\mycode{B}$ is $|\mycode{B}| = q^{m(d-\tau)}$ and therefore, with the pigeonhole principle, there exists a vector $\b \in \mycode{B}$ such that the number of codewords of rank $\tau$ in the translated code $\mycode{C} \oplus \b$ is lower bounded by
\begin{equation}\label{eq:lowerboundgabidulin_alternativ}
W_{\tau}(\mycode{C} \oplus \b) \geq \frac{\quadbinom{n}{\tau} (q^m-1)}{q^{m(d-\tau)}}.
\end{equation}
This means the number of codewords of $\mycode{C}$ in rank distance $\tau$ from $\b$ is $W_{\tau}(\mycode{C} \oplus \b)$ and \eqref{eq:lowerboundgabidulin_alternativ} yields the same lower bound on $\ell\big(m,n,d,\tau\big)$ as Theorem~\ref{theo:lower_bound_gabidulin}.
\end{remark}


\section{Bounds on the List Size of Arbitrary Rank-Metric Codes}\label{sec:bounds}
\subsection{Connection between Constant-Rank Codes and the List Size}\label{sec:connection_crc_list}
Before proving our bounds, let us explain the connection between the list size for decoding a certain rank-metric code and the cardinality of a certain constant-rank code. 
As in \eqref{eq:def_listsize}, denote the list of codewords for a decoding radius $\tau <d $ and an $\code{n,M,d_R=d}$ code $\mycode{C}$ by 
\begin{align*}
\List\big(\mycode{C},\r\big) &= \big\{\c_1,\c_2,\dots,\c_{|\List|}\big\} =  \mycode{C} \cap \Ball{\tau}{\r}
= \sum_{i=0}^{\tau}\big(\mycode{C}\cap\Sphere{i}{\r} \big),
\end{align*}
for some (received) word $\r \in \Fqm^n$. 
If we consider only the codewords with rank distance \emph{exactly} $\tau$ from the received word, i.e., on the sphere $\Sphere{\tau}{\r}$:
\begin{equation*}
\big\{\c_1,\c_2,\dots,\c_{\overline{\ell}}\big\} \defeq \mycode{C}\cap \Sphere{\tau}{\r},
\end{equation*} 
we obtain a lower bound on the maximum list size: $\ell \geq\overline{\ell} = |\mycode{C} \cap \Sphere{\tau}{\r}|$.

Now, consider a \emph{translate} of all codewords on the sphere of radius $\tau$ as follows:
\begin{equation*}
\overline{\mathcal L}\big(\mycode{C},\r\big) \defeq \big\{\r-\c_1,\r-\c_2,\dots,\r-\c_{\overline{\ell}}\big\}.
\end{equation*}
This set $\overline{\List}\big(\mycode{C},\r\big)$ is a $\CRC{n,M_R,d_R\geq d,\tau}$ constant-rank code over $\Fqm$ since $\rk(\r-\c_i)=\tau$ for all $i=1,\dots,\overline{\ell}$ and
its minimum rank distance is at least $d$, since 
\begin{equation*}
\rk(\r-\c_i-\r+\c_j) = \rk(\c_i-\c_j) \geq d, \quad \forall i,j,\; i\neq j. 
\end{equation*}
The cardinality of this constant-rank code is exactly $M_R=\overline{\ell}$.
For $\tau < d$, this constant-rank code is non-linear (or a translate of a linear code if $\mycode{C}$ is linear), since the rank of its codewords is $\tau$, but its minimum distance is at least $d$.

Hence, a translate of the list of all codewords of rank distance exactly $\tau$ from the received word can be interpreted as a constant-rank code.
This interpretation makes it possible to use bounds on the cardinality of a constant-rank codes to obtain bounds on the list size $\ell$ for decoding rank-metric codes.

\subsection{An Upper Bound on the List Size}\label{sec:upper_bound}
In this subsection, we derive an upper bound on the list size when decoding rank-metric codes.
This upper bound holds for \emph{any} rank-metric code and \emph{any} received word. 

\begin{theorem}[Bound II: Upper Bound on the List Size]\label{theo:upper_bound_listsize}
Let $\dhalfnicefrac \leq \tau < d\leq n\leq m$. 
Then, for \textbf{any} $\code{n,M, d}$ code $\mycode{C}$ in rank metric, 
the maximum list size is upper bounded as follows:
 \begin{align}
  \ell&=\ell\big(m,n,d,\tau\big) =\max_{\r \in \Fqm^n}\Big\{\big| \mycode{C}\cap\Ball{\tau}{\r} \big|\Big\}\nonumber\\
  &\leq 1+\sum\limits_{t=\dhalffrac+1}^{\tau} \frac{\quadbinom{n}{2 t + 1-d}}{\quadbinom{t}{2 t + 1-d}}\nonumber\\
  &\leq 1+  4\sum\limits_{t=\dhalffrac+1}^{\tau} q^{(2t-d+1)(n-t)}\nonumber\\
 &\leq 1 + 4\cdot \big(\tau-\left\lfloor\tfrac{d-1}{2}\right\rfloor\big)\cdot q^{(2\tau-d+1)(n-\dhalfnicefrac-1)}.\label{eq:upper_bound_rank}
  \end{align}
\end{theorem}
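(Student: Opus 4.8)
The plan is to slice the ball $\Ball{\tau}{\r}$ into spheres, dispose of the inner spheres using the unique-decoding radius, and treat each outer sphere with the constant-rank--to--constant-dimension machinery of Subsection~\ref{sec:connection_crc_list} together with Proposition~\ref{prop:max_cardinality}. First I would write
\[
\big|\mycode{C}\cap\Ball{\tau}{\r}\big| = \big|\mycode{C}\cap\Ball{\dhalffrac}{\r}\big| + \sum_{t=\dhalffrac+1}^{\tau}\big|\mycode{C}\cap\Sphere{t}{\r}\big|,
\]
and observe that the first summand is at most $1$: any two codewords within rank distance $\dhalffrac$ of $\r$ would have rank distance at most $2\dhalffrac\leq d-1$ from each other, contradicting the minimum distance $d$. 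This accounts for the leading $1$ and reduces the problem to bounding each $\big|\mycode{C}\cap\Sphere{t}{\r}\big|$ for $\dhalffrac+1\leq t\leq\tau$.

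For the central step, I would invoke the interpretation from Subsection~\ref{sec:connection_crc_list}: the translate $\{\r-\c_i : \c_i\in\mycode{C}\cap\Sphere{t}{\r}\}$ is a constant-rank code of constant rank $t$ whose minimum rank distance is at least $d$, so $\big|\mycode{C}\cap\Sphere{t}{\r}\big|\leq\maxCardinalityRank{n,d_R=d,t}$. Proposition~\ref{prop:max_cardinality}, applied with $r=t$ and $\delta=d-t$, then gives $\maxCardinalityRank{n,d_R=d,t}\leq\maxCardinalitySub{n,d_S=2(d-t),t}$. Here I must verify the hypothesis $1\leq\delta\leq r\leq n\leq m$ throughout the summation range: the bound $\delta=d-t\geq 1$ holds because $t\leq\tau<d$, while $\delta\leq r$, i.e.\ $t\geq d/2$, is exactly what the restriction $t\geq\dhalffrac+1$ guarantees (checked separately for even and odd $d$).

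To convert the constant-dimension cardinality into the stated Gaussian-binomial ratio, I would apply the standard anticode (Singleton-type) bound for constant-dimension codes, $\maxCardinalitySub{n,d_S=2\delta,r}\leq\quadbinom{n}{r-\delta+1}/\quadbinom{r}{r-\delta+1}$. With $r=t$ and $\delta=d-t$ one has $r-\delta+1=2t+1-d$, reproducing precisely $\quadbinom{n}{2t+1-d}/\quadbinom{t}{2t+1-d}$ and hence the first inequality of the theorem. The second inequality follows from the elementary estimates \eqref{eq:bounds_gaussian_binomial}: bounding the numerator by $4q^{(2t+1-d)(n-(2t+1-d))}$ and the denominator from below by $q^{(2t+1-d)(t-(2t+1-d))}$, the exponents subtract to $(2t+1-d)(n-t)$.

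For the last inequality I would bound the sum by its number of terms, $\tau-\dhalffrac$, times its largest summand. On the range $\dhalffrac+1\leq t\leq\tau$ both factors of the exponent are positive --- $2t+1-d\geq 1$ by the parity check above and $n-t>0$ since $t\leq\tau<d\leq n$ --- so I may bound the increasing factor by $2t+1-d\leq 2\tau+1-d$ and the decreasing factor by $n-t\leq n-\dhalffrac-1$; multiplying these positive quantities yields $q^{(2t+1-d)(n-t)}\leq q^{(2\tau+1-d)(n-\dhalffrac-1)}$ for every admissible $t$, and summing gives \eqref{eq:upper_bound_rank}. I expect the main obstacle to be the reduction chain of the second paragraph: one must confirm that the index conditions of Proposition~\ref{prop:max_cardinality} hold across the entire range (the parity-dependent inequality $t\geq d/2$) and that the anticode bound is the correct one, so that the substitution $\delta=d-t$ exactly recreates the binomial ratio appearing in the theorem.
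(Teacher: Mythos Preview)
Your proof is correct and follows the same approach as the paper: decompose the ball into spheres, bound the inner ball by $1$, interpret each outer sphere as a constant-rank code, apply Proposition~\ref{prop:max_cardinality} with $\delta=d-t$ and $r=t$, and then the Wang--Xing--Safavi-Naini (anticode) bound and the estimates \eqref{eq:bounds_gaussian_binomial}. If anything, you are more explicit than the paper in checking the hypothesis $1\leq\delta\leq r$ of Proposition~\ref{prop:max_cardinality} across the summation range and in justifying the final crude bound on the sum.
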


\begin{proof}
Let $\{\c_1,\c_2,\dots,\c_{\overline{\ell}}\}$ denote
the intersection of the sphere $\Sphere{t}{\r}$ in rank metric around $\r$ and the code $\mycode{C}$. 
As explained in Section~\ref{sec:connection_crc_list},
\begin{equation*}
\overline{\List}\big(\mycode{C},\r\big) = \big\{\r-\c_1,\r-\c_2,\dots,\r-\c_{\overline{\ell}}\big\}
\end{equation*}
can be seen as a $\CRC{n,M_R,d_R \geq d,t}$ constant-rank code over $\Fqm$ for any word $\r \in \Fqm^n$. 
Therefore, for any word $\r \in \Fqm^n$, the cardinality of $\overline{\List}\big(\mycode{C},\r\big)$ can be upper bounded by the maximum cardinality of a constant-rank code with the corresponding parameters:
\begin{align*}
|\overline{\List}\big(\mycode{C},\r\big)| =\big|\mycode{C}\cap \Sphere{t}{\r}\big| &\leq \maxCardinalityRank{n,d_R \geq d,t}\\ &\leq \maxCardinalityRank{n,d,t}.
\end{align*}
We can upper bound this maximum cardinality by Proposition~\ref{prop:max_cardinality} with $\delta = d-t$ and $r=t$ by the maximum cardinality of a constant-dimension code:
\begin{equation*}
\maxCardinalityRank{n,d,t} \leq \maxCardinalitySub{n,d_S = 2(d-t),t}.
\end{equation*}
For upper bounding the cardinality of such a constant-dimension code, we use the Wang--Xing--Safavi-Naini bound \cite{Wang2003Linear} (often also called \emph{anticode bound}) and obtain:
\begin{equation}\label{eq:card_cdc_wangxing}
\maxCardinalitySub{n,d_S = 2(d-t),t} \leq \frac{\quadbinom{n}{t-(d-t)+1}}{\quadbinom{t}{t-(d-t)+1}}.
\end{equation}
In the ball of radius $\dhalfnicefrac$ around $\r$, there can be at most one codeword of $\mycode{C}$ and therefore, the contribution to the list size is at most one.
For higher $t$, we sum up \eqref{eq:card_cdc_wangxing} from $t= \dhalfnicefrac+1$ up to $\tau$, use the upper bound on the Gaussian binomial \eqref{eq:bounds_gaussian_binomial} and upper bound the sum.
\end{proof}

This upper bound gives (almost) the same upper bound as we showed in \cite[Theorem~2]{Wachterzeh-BoundsListDecodingGabidulin_conf} and it can slightly be improved if we use better upper bounds on the maximum cardinality of constant-dimension codes instead of \eqref{eq:card_cdc_wangxing} in the derivation, for example the iterated Johnson bound for constant-dimension codes \cite[Corollary~3]{Xia2009Johnson}. 
In this case, we obtain:
\begin{align*}
&\ell=\ell\big(m,n,d,\tau\big) \leq 1+\\
&\sum\limits_{t=\dhalffrac+1}^{\tau}
\left\lfloor \frac{q^n-1}{q^{t}-1} \left \lfloor \frac{q^{n-1}-1}{q^{t-1}-1} \left\lfloor \dots \left \lfloor \frac{q^{n+d-2t}-1}{q^{d-t}-1} \right \rfloor  \dots \right \rfloor\right \rfloor\right \rfloor.
\end{align*}
However, the Wang--Xing--Safavi-Naini bound provides a nice closed-form expression and is asymptotically tight. Therefore, using better upper bounds for constant-dimension codes does not change the asymptotic behavior of our upper bound. 

Unfortunately, our upper bound on the list size of rank-metric codes is exponential in the length of the code for any $\tau > \dhalfnicefrac$ and not polynomial as the Johnson bound for Hamming metric.
However, the lower bound of Section~\ref{sec:lower_bound} shows that any upper bound depending only on the length $n \leq m$ and the minimum rank distance $d$ has to be exponential in $(\tau-\dhalfnicefrac)(n-\tau)$, since there exists a rank-metric code with such a list size.

\subsection{A Lower Bound on the List Size}\label{sec:lower_bound}

In this subsection, we prove the most significant difference to codes in Hamming metric. 
We show the existence of a rank-metric code with exponential list size for any decoding radius \emph{greater than half the minimum distance}.

First, we prove the existence of a certain constant-rank code in the following theorem.

\begin{theorem}[Constant-Rank Code]\label{theo:exists_crc}
Let $\dhalfnicefrac+1 \leq \tau < d\leq n\leq m$ and $\tau \leq n-\tau$. 
Then, there exists a $\CRC{n,M_R,d_R\geq d,\tau}$ constant-rank code over $\Fqm$ of cardinality $M_R=q^{(n-\tau)(\tau-\dhalfnicefrac)}$.
\end{theorem}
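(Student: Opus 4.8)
The plan is to realize the claimed code directly through Proposition~\ref{prop:gado_cdc_crc}, which turns a pair of dimension-$\tau$ constant-dimension codes---one in $\Fq^m$, one in $\Fq^n$---into a $\CRC{n,M_R,d_R,\tau}$ constant-rank code with $M_R=\min\{|\mycode{M}|,|\mycode{N}|\}$ and $d_R \geq \tfrac12(d_{S,M}+d_{S,N})$. The whole task then reduces to choosing two lifted MRD codes whose subspace distances sum to exactly $2d$ (forcing $d_R \geq d$) and whose smaller cardinality equals $q^{(n-\tau)(\tau-\dhalfnicefrac)}$. I would draw these input codes from Lemma~\ref{lem:subspacecode_lifted_mrd} and Corollary~\ref{cor:subspacecode_lifted_mrd_dodd}, which is exactly why a split on the parity of $d$ is natural: the floor in $\dhalfnicefrac$ behaves differently in the two cases.

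For $d$ even I would apply Lemma~\ref{lem:subspacecode_lifted_mrd} twice with target subspace distance $d$: once with its length parameter set to $m$, giving a $\CDC{m,|\mycode{M}|,d,\tau}$ code $\mycode{M}$ of cardinality $q^{(m-\tau)(\tau-d/2+1)}$, and once with length $n$, giving a $\CDC{n,|\mycode{N}|,d,\tau}$ code $\mycode{N}$ of cardinality $q^{(n-\tau)(\tau-d/2+1)}$. Here $d_{S,M}+d_{S,N}=2d$, and since $\dhalfnicefrac=d/2-1$ one has $\tau-d/2+1=\tau-\dhalfnicefrac$, so both exponents are $\tau-\dhalfnicefrac$; as $m\geq n$ the minimum is $|\mycode{N}|=q^{(n-\tau)(\tau-\dhalfnicefrac)}$, as required.

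For $d$ odd I would instead use the two bullets of Corollary~\ref{cor:subspacecode_lifted_mrd_dodd}: the first yields $\mycode{M}$, a $\CDC{m,|\mycode{M}|,d-1,\tau}$ code of cardinality $q^{(m-\tau)(\tau-\nicefrac{(d-1)}{2}+1)}$, and the second yields $\mycode{N}$, a $\CDC{n,|\mycode{N}|,d+1,\tau}$ code of cardinality $q^{(n-\tau)(\tau-\nicefrac{(d-1)}{2})}$. Now $d_{S,M}+d_{S,N}=(d-1)+(d+1)=2d$, the corollary itself records $|\mycode{N}|<|\mycode{M}|$, and $\dhalfnicefrac=\nicefrac{(d-1)}{2}$, so $M_R=|\mycode{N}|=q^{(n-\tau)(\tau-\dhalfnicefrac)}$ once more. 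In both cases Proposition~\ref{prop:gado_cdc_crc}, invoked with $r=\tau$, then delivers the $\CRC{n,M_R,d_R\geq d,\tau}$ code of the stated cardinality.

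The remaining work is the routine verification that the hypotheses of the two input results hold under $\dhalfnicefrac+1\leq\tau<d\leq n\leq m$ and $\tau\leq n-\tau$: the dimension bound $\tau\geq d/2$ (even) respectively $\tau\geq\nicefrac{(d-1)}{2}+1$ (odd) is precisely $\tau\geq\dhalfnicefrac+1$, and the length constraint $\tau\leq m-\tau$ needed for the $\Fq^m$-code follows from $2\tau\leq n\leq m$, while $\tau\leq n-\tau$ is given. The one genuine design choice---and the crux of the argument---is the parity-dependent pairing of distances $(d,d)$ versus $(d-1,d+1)$: this is what makes the subspace distances add to exactly $2d$ while simultaneously pinning the smaller cardinality to the target exponent $\tau-\dhalfnicefrac$, so I expect identifying that pairing (rather than any calculation) to be the main point to get right.
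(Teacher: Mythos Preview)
Your proposal is correct and follows essentially the same argument as the paper: it splits on the parity of $d$, builds $\mycode{M}$ and $\mycode{N}$ via Lemma~\ref{lem:subspacecode_lifted_mrd} (even case, distances $(d,d)$) respectively Corollary~\ref{cor:subspacecode_lifted_mrd_dodd} (odd case, distances $(d-1,d+1)$), and then invokes Proposition~\ref{prop:gado_cdc_crc} to obtain the constant-rank code with the stated cardinality and $d_R\geq d$. The verification of the hypotheses and the identification of the parity-dependent distance pairing as the crux are both exactly as in the paper.
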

\begin{IEEEproof}
First, assume $d$ is even. 
Let us construct a $\CDC{m,|\mycode{M}|,d,\tau}$ constant-dimension code $\mycode{M}$ and a $\CDC{n,|\mycode{N}|,d,\tau}$ code $\mycode{N}$ by lifting an $\MRDlinear{\tau,\tau-\nicefrac{d}{2}+1}$ code over $\mathbb{F}_{q^{m-\tau}}$ of minimum rank distance $\nicefrac{d}{2}$ and an $\MRDlinear{\tau,\tau-\nicefrac{d}{2}+1}$ code over $\mathbb{F}_{q^{n-\tau}}$ of minimum rank distance $\nicefrac{d}{2}$ as in Lemma~\ref{lem:subspacecode_lifted_mrd}. 
Then, with Lemma~\ref{lem:subspacecode_lifted_mrd}:
\begin{equation*}
|\mycode{N}|=q^{(n-\tau)(\tau-\nicefrac{d}{2}+1)} \leq |\mycode{M}|=q^{(m-\tau)(\tau-\nicefrac{d}{2}+1)}.
\end{equation*}
From Proposition~\ref{prop:gado_cdc_crc}, we know therefore
there exists a $\CRC{n,M_R,d_R,\tau}$ code of cardinality 
\begin{equation*}
M_R=\min\{|\mycode{N}| ,|\mycode{M}| \}=q^{(n-\tau)(\tau-\nicefrac{d}{2}+1)}=q^{(n-\tau)(\tau-\dhalfnicefrac)}.
\end{equation*}
For its rank distance, the following holds with Proposition~\ref{prop:gado_cdc_crc}:
\begin{equation*}
d_R \geq \frac{1}{2}\; d_{S,M} +\frac{1}{2}\; d_{S,N} = d.
\end{equation*}

Second, assume $d$ is odd. 
Let $\mycode{M}$ be a $\CDC{m,|\mycode{M}|,d-1,\tau}$ code and $\mycode{N}$ be a $\CDC{n,|\mycode{N}|,d+1,\tau}$ code, constructed as in Corollary~\ref{cor:subspacecode_lifted_mrd_dodd}. 
Then, 
\begin{equation*}
|\mycode{N}|=q^{(n-\tau)(\tau-\nicefrac{(d+1)}{2}+1)} \leq|\mycode{M}| =q^{(m-\tau)(\tau-\nicefrac{(d-1)}{2}+1)}.
\end{equation*}
From Proposition~\ref{prop:gado_cdc_crc}, we know that
there exists a $\CRC{n,M_R,d_R,\tau}$ code of cardinality 
\begin{align*}
M_R=\min\{|\mycode{N}| ,|\mycode{M}| \}=|\mycode{N}|
&=q^{(n-\tau)(\tau-\nicefrac{(d-1)}{2})}\\
&=q^{(n-\tau)(\tau-\dhalfnicefrac)}.
\end{align*}
With Proposition~\ref{prop:gado_cdc_crc}, the rank distance $d_R$ is lower bounded by:
\begin{equation*}
d_R \geq \frac{1}{2}\; d_{S,M} +\frac{1}{2}\; d_{S,N} = \frac{1}{2}\;(d-1) +\frac{1}{2}\;(d+1) =d.
\end{equation*}
\end{IEEEproof}

This constant-rank code can now directly be used to show the existence of a rank-metric code with exponential list size.

\begin{theorem}[Bound III: Lower Bound on the List Size]\label{theo:lower_bound_list_size}
Let $\dhalfnicefrac+1 \leq \tau < d\leq n$ and $\tau \leq n-\tau$. 
Then, there exists an $\code{n,M,d_R \geq d}$ code $\mycode{C}$ over $\Fqm$ of length $n\leq m$ and minimum rank distance $d_R \geq d$, and a word $\r \in \Fqm^n$
such that 
\begin{equation}\label{eq:lower_bound_rankmetric}
\ell=\ell\big(m,n,d,\tau\big) 
\geq \big|\mycode{C}\cap \Ball{\tau}{\r}\big| \geq q^{(n-\tau)(\tau-\dhalfnicefrac)}.
\end{equation}
\end{theorem}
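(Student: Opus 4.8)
The plan is to run the correspondence of Subsection~\ref{sec:connection_crc_list} in reverse. There, a \emph{translate} of the decoding list was recognized as a constant-rank code; here I would instead start from a constant-rank code and read it directly as the list. The key observation is that a constant-rank code is itself a legitimate rank-metric code (a set of vectors, all of rank $\tau$, with large pairwise rank distance), and every one of its codewords automatically sits at rank distance exactly $\tau$ from the all-zero word. Thus a single ball captures the entire code.

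Concretely, I would first invoke Theorem~\ref{theo:exists_crc} under the present hypotheses $\dhalfnicefrac+1 \leq \tau < d \leq n \leq m$ and $\tau \leq n-\tau$ to obtain a $\CRC{n,M_R,d_R\geq d,\tau}$ constant-rank code $\mycode{C}$ over $\Fqm$ of cardinality $M_R = q^{(n-\tau)(\tau-\dhalfnicefrac)}$. By definition $\mycode{C}$ is an $\code{n,M_R,d_R\geq d}$ rank-metric code all of whose codewords have rank exactly $\tau$, so it already meets the structural requirements of the statement (minimum rank distance at least $d$).

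Next I would choose the received word to be $\r = \0$. For every codeword $\c \in \mycode{C}$ one has $\rk(\r-\c) = \rk(\c) = \tau \leq \tau$, so $\c \in \Ball{\tau}{\0}$; hence $\mycode{C} \cap \Ball{\tau}{\0} = \mycode{C}$ and $\big|\mycode{C}\cap\Ball{\tau}{\0}\big| = M_R$. Since the maximum list size is taken over all received words, this yields $\ell \geq \big|\mycode{C}\cap\Ball{\tau}{\0}\big| = q^{(n-\tau)(\tau-\dhalfnicefrac)}$, which is exactly \eqref{eq:lower_bound_rankmetric}.

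Given Theorem~\ref{theo:exists_crc}, there is no genuine obstacle left: the substantive work (constructing the constant-rank code from lifted MRD codes via Proposition~\ref{prop:gado_cdc_crc}) is already complete. The only point that needs care is recognizing that in list decoding the received word need not be a codeword, so picking $\r = \0$ (which has rank $0$ and therefore is not itself in $\mycode{C}$) is perfectly legitimate and is precisely what places all $M_R$ codewords of rank $\tau$ simultaneously inside one ball of radius $\tau$.
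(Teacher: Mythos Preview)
Your proposal is correct and is essentially identical to the paper's own proof: both invoke Theorem~\ref{theo:exists_crc}, take the resulting constant-rank code itself as the rank-metric code $\mycode{C}$, choose $\r=\0$, and observe that every codeword of $\mycode{C}$ lies in $\Ball{\tau}{\0}$ (indeed on $\Sphere{\tau}{\0}$) while $\0\notin\mycode{C}$. The paper phrases it by naming the codewords $\a_1,\dots,\a_{|\mycode{N}|}$ explicitly, but the argument is the same.
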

\begin{IEEEproof}
Let the $\CRC{n,M_R,d_R \geq d,\tau}$ constant-rank code from Theorem~\ref{theo:exists_crc} consist of the codewords: 
\begin{equation*}
\big\{\a_1, \a_2,\dots, \a_{|\mycode{N}|}\big\}.
\end{equation*}
This code has cardinality $M_R=|\mycode{N}|= q^{(n-\tau)(\tau-\dhalfnicefrac)} $ (see Theorem~\ref{theo:exists_crc}).
Choose $\r = \0$, and hence, $\rk(\r-\a_i)=\rk(\a_i)=\tau$  for all $i=1,\dots,|\mycode{N}|$ since the $\a_i$ are codewords of a constant-rank code of rank $\tau$.
Moreover, 
$d_R(\a_i,\a_j)=\rk(\a_i-\a_j)  \geq d$ since the constant-rank code has minimum rank distance at least $ d$.

Therefore, $\a_1, \dots, \a_{|\mycode{N}|}$ are codewords of an $\code{n,M,d_R\geq d}$ code $\mycode{C}$ over $\Fqm$ in rank metric, which all lie on the sphere of rank radius $\tau$ around $\r=\0$ (which is not a codeword of $\mycode{C}$).

Hence, there exists an $\code{n,M,d_R\geq d}$ code $\mycode{C}$ over $\Fqm$ of length $n \leq m$ such that $\ell \geq |\mycode{C}\cap \Ball{\tau}{\r}| \geq |\mycode{C}\cap \Sphere{\tau}{\r}| =  |\mycode{N}|=q^{(n-\tau)(\tau-\dhalfnicefrac)}$.
\end{IEEEproof}

Notice that this $\code{n,M,d_R\geq d}$ code in rank metric is non-linear since it has codewords of weight $\tau < d$, but minimum rank distance at least $d$.

For constant code rate $R = \nicefrac{k}{n}$ and constant relative decoding radius $\nicefrac{\tau}{n}$, where $\tau > \dhalfnicefrac$, \eqref{eq:lower_bound_rankmetric} gives
\begin{align*}
\ell&\geq 
q^{n^2\left(1-\nicefrac{\tau}{n})(\nicefrac{\tau}{n}-\nicefrac{1}{2}(1-R)\right)} = q^{n^2\cdot const}.
\end{align*}
Therefore, the lower bound for this $\code{n,M,d_R\geq d}$ code is exponential in $n \leq m$ for any $\tau > \dhalfnicefrac$.
Hence, Theorem~\ref{theo:lower_bound_list_size} shows that there exist rank-metric codes, where the number of codewords in a rank metric ball around the all-zero word is exponential in $n$, thereby prohibiting a polynomial-time list decoding algorithm. However, this does not mean that this holds for \emph{any} rank-metric code. In particular, the theorem does not provide a conclusion if there exists a \emph{linear} code or even a \emph{Gabidulin} code with this list size. 

\begin{remark}[Non-Zero Received Word]
The rank-metric code $\mycode{C}$ shown in Theorem~\ref{theo:lower_bound_list_size} is clearly not linear.
Instead of choosing $\r = \0$, we can choose for example $\r=\a_1$. 
The codewords of the $\CRC{n,M_R,d_R \geq d,\tau}$ constant-rank code from Theorem~\ref{theo:exists_crc} of cardinality $M_R=|\mycode{N}|= q^{(n-\tau)(\tau-\dhalfnicefrac)} $ are denoted by:
\begin{equation*}
\big\{\a_1, \a_2,\dots, \a_{|\mycode{N}|}\big\}.
\end{equation*}
Then, the following set of words
\begin{align*}
\big\{\c_1, \c_2,\dots, \c_{|\mycode{N}|}\big\} &\defeq\big\{\a_1, \a_2,\dots, \a_{|\mycode{N}|}\big\}-\a_1 \\
&\defeq \big\{\0, \a_1-\a_2,\a_1-\a_3,\dots, \a_1-\a_{|\mycode{N}|}\big\}
\end{align*}
consists of codewords of an $\code{n,M,d_R\geq d}$ code $\mycode{C}$ over $\Fqm$ since $d_R(\c_i,\c_j)=\rk(\c_i-\c_j)=\rk(\a_1-\a_i-\a_1+\a_j)  =\rk(\a_j-\a_i)  \geq d$ for $i \neq j$ since 
$\a_i, \a_j$ are codewords of the constant-rank code of minimum rank distance $d_R$.
Moreover, all codewords $\c_i$ have rank distance exactly $\tau$ from $\r$ since $\rk(\r-\c_i) = \rk(\a_i) = \tau$ and the same bound on the list size of $\mycode{C}$
follows as in Theorem~\ref{theo:lower_bound_list_size}.
This $\code{n,M,d_R\geq d}$ rank-metric code over $\Fqm$ is not necessarily linear, but also not necessarily not linear.
\end{remark}

The next corollary shows that the restriction $\tau \leq n-\tau$ does not limit the code rate for which Theorem~\ref{theo:lower_bound_list_size} shows an exponential behavior of the list size.
For the special case of $\tau=\dhalfnicefrac+1$, the condition $\tau \leq n-\tau$ is always fulfilled for even minimum distance since $d \leq n$. For odd minimum $d-1\leq n$ has to hold.
Notice that $d=n$ is a trivial code. 

\begin{corollary}[Special Case ${\tau = \dhalfnicefrac +1}$]\label{cor:special_case_plus_one}
Let $n\leq m$, $\tau= \dhalfnicefrac +1$ and $d \leq n-1$ when $d$ is odd. Then, there exists an $\code{n,M,d_R\geq   d}$ code $\mycode{C}$
and a word $\r \in \Fqm^n$ such that $|\mycode{C}\cap \Ball{\tau}{\r}| \geq q^{(n-\tau)}$.
\end{corollary}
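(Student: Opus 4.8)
The plan is to obtain Corollary~\ref{cor:special_case_plus_one} as a direct specialization of Theorem~\ref{theo:lower_bound_list_size} to the value $\tau = \dhalfnicefrac + 1$. First I would check that all the hypotheses of Theorem~\ref{theo:lower_bound_list_size}, namely $\dhalfnicefrac + 1 \leq \tau < d \leq n$ together with $\tau \leq n-\tau$, are met for this particular $\tau$. The inequality $\dhalfnicefrac + 1 \leq \tau$ holds with equality by construction. The strict inequality $\tau < d$ holds for every nontrivial minimum distance: a one-line parity check gives $\tau = d/2 < d$ when $d$ is even (i.e.\ $d \geq 2$) and $\tau = (d+1)/2 < d$ when $d$ is odd (i.e.\ $d \geq 3$), the case $d=1$ being the trivial code noted above. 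The bound $d \leq n$ is part of the hypothesis.

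The only substantive point is verifying $\tau \leq n-\tau$, equivalently $2\tau \leq n$, and this is exactly where the parity of $d$ enters, so I would split into two cases. If $d$ is even, then $\tau = d/2$, hence $2\tau = d$ and $\tau \leq n-\tau$ reduces to $d \leq n$, which holds by assumption; this is why no extra condition is required in the even case. If $d$ is odd, then $\tau = (d+1)/2$, hence $2\tau = d+1$ and $\tau \leq n-\tau$ reduces to $d \leq n-1$, which is precisely the side condition imposed in the statement for odd $d$. I expect this parity case-split to be the only place where any care is needed, and it is exactly what forces the hypothesis $d \leq n-1$ for odd $d$.

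With the hypotheses verified, I would simply evaluate the lower bound \eqref{eq:lower_bound_rankmetric} of Theorem~\ref{theo:lower_bound_list_size} at $\tau = \dhalfnicefrac + 1$. The exponent collapses to
\begin{equation*}
(n-\tau)\big(\tau - \dhalfnicefrac\big) = (n-\tau)\cdot 1 = n-\tau,
\end{equation*}
so there exists an $\code{n,M,d_R \geq d}$ code $\mycode{C}$ over $\Fqm$ and a word $\r \in \Fqm^n$ with $|\mycode{C} \cap \Ball{\tau}{\r}| \geq q^{n-\tau}$, as claimed. All remaining steps are a verbatim substitution into the already-established Theorem~\ref{theo:lower_bound_list_size}, so no new construction is needed.
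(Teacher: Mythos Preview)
Your proposal is correct and takes essentially the same approach as the paper: the corollary is stated immediately after Theorem~\ref{theo:lower_bound_list_size} and the surrounding text justifies it exactly by checking that $\tau \leq n-\tau$ reduces to $d \leq n$ for even $d$ and to $d \leq n-1$ for odd $d$, after which the exponent in \eqref{eq:lower_bound_rankmetric} collapses to $n-\tau$. Your verification of all the remaining hypotheses ($\tau < d$, etc.) is actually more explicit than what the paper writes out.
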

This corollary hence shows that for any $n \leq m $ and \emph{any code rate} there exists a rank-metric code of rank distance at least $d$ whose list size can be exponential in $n$.

For the special case when $d$ is even, $\tau=d/2$ and $n=m$, the minimum rank distance of $\mycode{C}$ is \emph{exactly} $d$ since the lower and upper bound on $d_R$ in Proposition~\ref{prop:gado_cdc_crc} coincide.
\begin{corollary}[Special Case $\tau = d/2$]\label{cor:special_case_dhalf}
Let $n=m$, $d$ be even and $\tau=d/2$. Then, there exists an $\code{n,M,d_R =d}$ code $\mycode{C}$ in rank metric
and a word $\r \in \Fqm^n$ such that $|\mycode{C}\cap \Ball{\tau}{\r}| \geq q^{(n-\tau)}$.
\end{corollary}

Corollary~\ref{cor:special_case_plus_one} shows that the condition $\tau \leq n- \tau$ does not restrict lists of exponential size to a certain code rate. However, the following remark shows anyway what happens if we assume $\tau > n-\tau$.

\begin{remark}[Case $\tau > n-\tau$]
Let $\dhalfnicefrac+1 \leq \tau < d\leq n\leq m$ and $\tau > n-\tau$. 
We can apply the same strategy as before: construct a constant-dimension code and show the existence of a constant-rank code of certain cardinality. 
For simplicity, consider only the case when $d$ is even, the case of odd $d$ follows immediately. 
Consider the lifting of a linear $\MRDlinear{n-\tau,n-\tau-d/2+1}$ code $\mycode{C}$ over $\mathbb{F}_{q^\tau}$ of minimum rank distance $d/2$. 
Now, let us lift $\mathcal{I}(\mycode{C})$, i.e., we consider $[\I_{\tau} \ \C_i]$ with $\C_i \in \Fq^{\tau \times (n-\tau)}$ for all $i=1,\dots,|\mycode{C}|$. 
In contrast to Lemma~\ref{lem:subspacecode_lifted_mrd}, we do not transpose the codewords of the MRD code here.
The subspaces defined by this lifting are a $\CDC{n,M_S,d_S=d,\tau}$ constant-dimension code of cardinality $M_S=q^{\tau(n-\tau-d/2+1)}$.

Then, with the same method as in Theorems~\ref{theo:exists_crc} and \ref{theo:lower_bound_list_size} and a $\CDC{m,|\mycode{M}|,d,\tau}$ code $\mycode{M}$ and a $\CDC{n,|\mycode{N}|,d,\tau}$ code $\mycode{N}$, there exists an $\code{n,M,d_R \geq d}$ code $\mycode{C}$ in rank metric and a word $\r \in \Fqm^n$
such that 
\begin{equation*}
\big|\mycode{C}\cap \Ball{\tau}{\r}\big| \geq q^{\tau(n-\tau-d/2+1)}.
\end{equation*}
However, the interpretation of this value is not so easy, since it depends on the concrete values of $\tau,d$ and $n$ if the exponent is positive and if this bound is exponential in $n$ or not.
Moreover, as mention before, we do not need this investigation for polynomial-time list decodability as Theorem~\ref{theo:lower_bound_list_size} shows that the list size is lower bounded by $q^{(n-\tau)}$ if we choose $\tau = \dhalfnicefrac+1$ for codes of \textbf{any} rate, where $\tau \leq n-\tau$ is fulfilled.
\end{remark}

The following lemma shows an improvement in the exponent of Theorem~\ref{theo:lower_bound_list_size} for the case $\tau = d/2$ or when $m$ is quite large compared to $n$.
\begin{lemma}[Bound of Theorem~\ref{theo:lower_bound_list_size} for $\tau = d/2$ or large $m$]\label{lem:refinement_lower_bound}
Let $\dhalfnicefrac < \tau < d < n$ and $\tau \leq n-\tau$. 
If either $\tau = d/2$ or $m \geq (n-\tau)(2\tau-d+1)+\tau+1$, then
there exists an $\code{n,M,d_R = d}$ code $\mycode{C}$ over $\Fqm$ of length $n\leq m$ and minimum rank distance $d$, and a word $\r \in \Fqm^n$
such that 
\begin{equation}\label{eq:lower_bound_rankmetric_refined}
\ell=\ell\big(m,n,d,\tau\big) \geq \big|\mycode{C}\cap \Ball{\tau}{\r}\big| \geq q^{(n-\tau)(2\tau-d+1)}.
\end{equation}
\end{lemma}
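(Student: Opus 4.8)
The plan is to reuse the two-ingredient construction of Proposition~\ref{prop:gado_cdc_crc}, but to choose the row-space code and the column-space code \emph{asymmetrically}, so as to exploit the extra room available when $m$ is large (or when $\tau=d/2$). Concretely, I would let $\mycode{N}$ be a constant-dimension code in $\Grassm{n,\tau}$ of \emph{even} subspace distance $d_{S,N}=2(d-\tau)$, obtained by lifting a linear $\MRDlinear{\tau,\,2\tau-d+1}$ code over $\mathbb{F}_{q^{n-\tau}}$ as in Lemma~\ref{lem:subspacecode_lifted_mrd}; the hypotheses $\tau\leq n-\tau$ and $\tau\geq d-\tau$ (the latter being $\tau\geq d/2$, which follows from $\tau>\dhalfnicefrac$) guarantee its existence, and its cardinality is exactly $q^{(n-\tau)(2\tau-d+1)}$, the target exponent. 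For the column spaces I would take $\mycode{M}$ to be a \emph{partial spread} in $\Grassm{m,\tau}$, i.e.\ a constant-dimension code of maximal subspace distance $d_{S,M}=2\tau$, realized as the lifting of an $\MRDlinear{\tau,1}$ code over $\mathbb{F}_{q^{m-\tau}}$, of cardinality $q^{m-\tau}$. Both subspace distances are even irrespective of the parity of $d$, so a single construction covers both cases (no even/odd split as in Theorem~\ref{theo:exists_crc} is needed), and $\frac{1}{2}d_{S,M}+\frac{1}{2}d_{S,N}=\tau+(d-\tau)=d$.

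Next I would verify that the hypotheses are precisely what makes $\mycode{M}$ the larger of the two codes. When $\tau=d/2$ one has $|\mycode{N}|=q^{n-\tau}\leq q^{m-\tau}=|\mycode{M}|$ simply because $n\leq m$, which is why no extra condition on $m$ is required in that case; otherwise the condition $m\geq(n-\tau)(2\tau-d+1)+\tau+1$ is exactly $m-\tau\geq(n-\tau)(2\tau-d+1)$ with one unit of slack, i.e.\ $|\mycode{M}|\geq|\mycode{N}|$, and it also forces $\tau<m-\tau$ so that the spread $\mycode{M}$ is nontrivial. Having $|\mycode{M}|\geq|\mycode{N}|$, I would pass to a subcode $\mycode{M}'\subseteq\mycode{M}$ with $|\mycode{M}'|=|\mycode{N}|$; since every pair of codewords of a spread is at distance $2\tau$, the subcode still has subspace distance exactly $2\tau$. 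Applying Proposition~\ref{prop:gado_cdc_crc} to the equal-cardinality pair $(\mycode{M}',\mycode{N})$ then yields a $\CRC{n,M_R,d_R,\tau}$ constant-rank code of cardinality $M_R=q^{(n-\tau)(2\tau-d+1)}$.

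The distance is pinned down \emph{exactly} here: the lower bound of Proposition~\ref{prop:gado_cdc_crc} gives $d_R\geq\frac{1}{2}d_{S,M}+\frac{1}{2}d_{S,N}=d$, while, because $|\mycode{M}'|=|\mycode{N}|$, the matching upper bound gives $d_R\leq\frac{1}{2}\min\{2\tau,2(d-\tau)\}+\tau=(d-\tau)+\tau=d$, using $\tau\geq d-\tau$. Finally, exactly as in Theorem~\ref{theo:lower_bound_list_size}, I would choose $\r=\0$: the $M_R$ codewords of this constant-rank code all have rank $\tau$, hence lie on $\Sphere{\tau}{\0}$, and they pairwise differ in rank by at least $d$, so they form an $\code{n,M,d_R=d}$ code whose intersection with $\Ball{\tau}{\0}$ has size at least $q^{(n-\tau)(2\tau-d+1)}$, establishing \eqref{eq:lower_bound_rankmetric_refined}.

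The main obstacle is not the cardinality count but obtaining the \emph{exact} minimum rank distance $d_R=d$ rather than merely $d_R\geq d$: this forces the equal-cardinality branch of Proposition~\ref{prop:gado_cdc_crc}, which is precisely why the two hypotheses (large $m$, or $\tau=d/2$) are needed---they are exactly the conditions under which the large column-space code $\mycode{M}$ can be trimmed down to match $|\mycode{N}|$ while keeping maximal subspace distance $2\tau$. Care should be taken to confirm that trimming a spread preserves its distance and that the row-space pair realizing $d_{S,N}$ survives the pairing, so that the upper bound of Proposition~\ref{prop:gado_cdc_crc} (hence Lemma~\ref{lem:connection_rankdist_subspacedist}) genuinely applies to the constructed code.
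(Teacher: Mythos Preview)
Your proof is correct and more self-contained than the paper's. The paper takes a shorter route: it invokes \cite[Theorem~2]{Gadouleau2010ConstantRank} as a black box, which states that under precisely the two hypotheses of the lemma (with $r=\tau$, $\delta=d-\tau$) one has $\maxCardinalityRank{n,d,\tau}=\maxCardinalitySub{n,2(d-\tau),\tau}$, and then lower-bounds the right-hand side by the lifted-MRD cardinality $q^{(n-\tau)(2\tau-d+1)}$. Your explicit construction---a partial spread $\mycode{M}\subseteq\Grassm{m,\tau}$ of subspace distance $2\tau$ paired with a lifted-MRD code $\mycode{N}\subseteq\Grassm{n,\tau}$ of subspace distance $2(d-\tau)$, with $\mycode{M}$ trimmed to match $|\mycode{N}|$---is essentially a proof of (the relevant direction of) that cited theorem, carried out entirely with the tools already developed in the paper (Proposition~\ref{prop:gado_cdc_crc} and Lemma~\ref{lem:subspacecode_lifted_mrd}). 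The asymmetric choice you make (maximal distance on the column side, minimal useful distance on the row side) is the key idea the paper outsources to the reference; your version also makes transparent that the role of the two hypotheses is exactly to guarantee $|\mycode{M}|\geq|\mycode{N}|$ so that the equal-cardinality branch of Proposition~\ref{prop:gado_cdc_crc} applies and pins down $d_R=d$ rather than merely $d_R\geq d$. Both approaches finish identically by taking $\r=\0$.
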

\begin{proof}
We use \cite[Theorem~2]{Gadouleau2010ConstantRank}, which shows that 
for $2r \leq n \leq m$ and $1 \leq \delta \leq r$
there exists a constant-rank code of cardinality
\begin{equation*}
\maxCardinalityRank{n,\delta+r,r} = \maxCardinalitySub{n,d_S=2\delta,r} 
\end{equation*}
if either $\delta = r$ or $m \geq (n-r)(r-d+1)+r+1$.

\begin{figure*}[ht!]
\normalsize
\begin{minipage}[b]{.47\textwidth}
\centering
\includegraphics{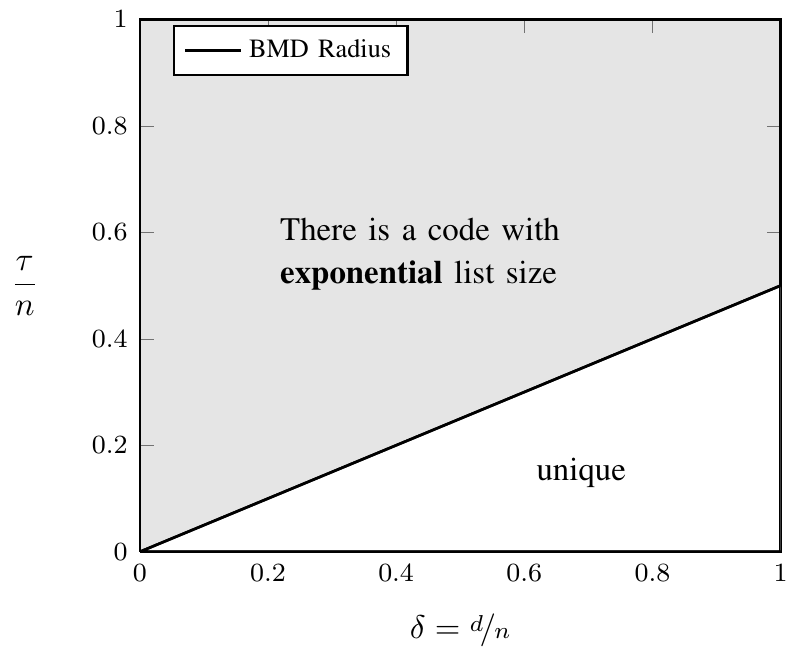}
\subcaption{General codes in rank metric}\label{fig:rank_decoding_region}
\end{minipage}
\begin{minipage}[b]{.47\textwidth}
\centering
\includegraphics{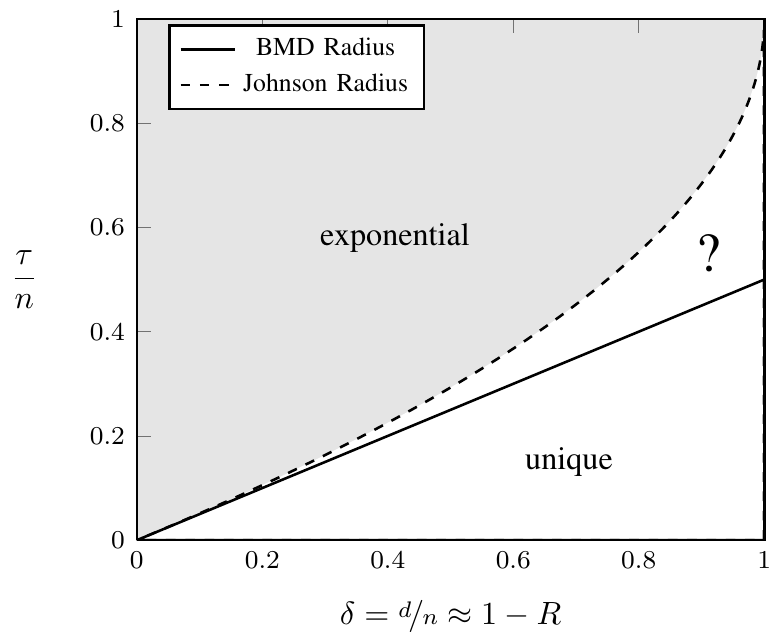}
\subcaption{Gabidulin codes}\label{fig:gabidulin_decoding_region}
\end{minipage}
\caption{List size of codes in rank metric, depending on normalized Bounded Minimum Distance (BMD) decoding radius $\tau_{BMD}/n = \dhalfnicefrac/n$ and normalized Johnson radius $\tau_J/n = \nicefrac{(\taujohnson)}{n}$
and on the normalized minimum distance $\delta = \nicefrac{d}{n}$. \label{fig:decoding_regions}}
\hrulefill
\end{figure*}

Thus, similar to the proof of Theorem~\ref{theo:upper_bound_listsize}, we choose
$r = \tau$ and $\delta = d-\tau$. Hence, there exists a $\CRC{n,M_R,d,\tau}$ constant-rank code of cardinality
\begin{align*}
M_R=\maxCardinalitySub{n,d_S=2(d-\tau),\tau} &\geq  q^{(n-\tau)(\tau-(d-\tau)+1)}\\
& = q^{(n-\tau)(2\tau-d+1)},
\end{align*}
where we used the cardinality of a constant-dimension code based on a lifted MRD code (see Lemma~\ref{lem:subspacecode_lifted_mrd}) as lower bound.
Analog to Theorem~\ref{theo:lower_bound_list_size}, we can use this constant-rank code to bound the list size.
\end{proof}
For the case $\tau = d/2$, this results in Corollary~\ref{cor:special_case_dhalf}.
Hence, for the cases of Lemma~\ref{lem:refinement_lower_bound}, the lower bound on the list size \eqref{eq:lower_bound_rankmetric_refined} and the upper bound \eqref{eq:upper_bound_rank} coincide up to a scalar factor and the upper bound is therefore asymptotically tight.

\section{Interpretation and Conclusion}\label{sec:interpretation}

This section interprets the results from the previous sections and compares them to known bounds on list decoding in Hamming metric (see e.g. \cite[Chapters 4 and 6]{Guruswami_ListDecodingofError-CorrectingCodes_1999}).

Theorem~\ref{theo:lower_bound_list_size} shows that there is a code over $\Fqm$ of length $n\leq m$ of rank distance at least $d$ 
such that there is a ball of any radius $\tau > \dhalfnicefrac$, which contains a number of codewords that grows exponentially in the length $n$. 
Hence, there exists a rank-metric code for which \emph{no} polynomial-time list decoding algorithm beyond half the minimum distance is possible. 
This bound is tight as a function of $d$ and $n$, since below we can clearly always decode uniquely. 
It does not mean that there exists no rank-metric code with a polynomial list size for a decoding radius greater than half the minimum distance, but 
in order to find a polynomial upper bound, it is necessary to use further properties of the code 
in the derivation of such bounds (linearity or the explicit code structure). 

In particular, for Gabidulin codes, there is still an unknown region between half the  minimum distance and the Johnson radius
since we could only prove that the list size can be exponential beyond the Johnson radius (see Theorem~\ref{theo:lower_bound_gabidulin}).
These decoding regions are shown in Fig.~\ref{fig:decoding_regions}, depending on the relative normalized minimum rank distance $\delta = d/n$.
%

Further, our lower bound from Theorem~\ref{theo:lower_bound_list_size} shows that there cannot exist a polynomial upper bound depending only on $n$ and $d$ similar to the Johnson bound for Hamming metric. Hence, our upper bound from Theorem~\ref{theo:upper_bound_listsize} differs only by a scalar factor of two in the exponent from the lower bound from Theorem~\ref{theo:lower_bound_list_size}. A shown in Lemma~\ref{lem:refinement_lower_bound}, the upper bound is even asymptotically tight in some cases.

These results show a surprising difference to codes in Hamming metric. \emph{Any} ball in Hamming metric of radius less than the Johnson radius $\tau_J = \taujohnson$ always contains a \emph{polynomial} number of codewords of \emph{any} code of length $n$ and minimum Hamming distance $d$. 
Moreover, it can be shown that there exist codes in Hamming metric with an exponential number of codewords if the radius is at least the Johnson radius \cite{Goldreich-Rubinfeld-Sudan:DM2000,Guruswami_ListDecodingofError-CorrectingCodes_1999}. However, it is not known whether this bound is also tight for special classes of codes, e.g. Reed--Solomon codes. 
This points out another difference between Gabidulin and Reed--Solomon codes, since for Reed--Solomon codes the minimum radius for which an exponential list size is proven is much higher \cite{Justesen2001Bounds,BenSasson2010Subspace} than for Gabidulin codes (see Theorem~\ref{theo:lower_bound_gabidulin}). 

Nevertheless, it is often believed that the Johnson bound is tight not only for codes in Hamming metric in general, but also for Reed--Solomon codes. Drawing a parallel conclusion for Gabidulin codes would mean that the maximum list size of Gabidulin codes could become exponential directly beyond half the minimum distance, but this requires additional research.

For future research, it is interesting to find a bound for the unknown region when list decoding Gabidulin codes. However, this seems to be quite difficult 
since the gap between the Johnson radius and the known lower exponential bounds for Reed--Solomon codes seems to translate into the
gap between half the minimum distance and the Johnson radius for Gabidulin codes
and despite numerous publications on this topic, nobody could close the gap for Reed--Solomon codes.
As a first step, it might be possible to prove something like Theorem~\ref{theo:lower_bound_list_size} for \emph{linear} codes in rank metric.

\section*{Acknowledgment}
The author thanks Pierre Loidreau for the valuable discussions and the careful reading of the manuscript, 
the reviewers for the helpful comments that helped to improve the presentation of the paper and Maximilien Gadouleau for pointing out the special case shown in Lemma~\ref{lem:refinement_lower_bound}.

%
%

\bibliographystyle{IEEEtran}
\bibliography{antoniawachter_bounds}
\end{document}